\newcommand{\eat}[1]{}
\newtheorem{observation}[theorem]{Observation}
\newlength{\myindentrightmargin}
\newlength{\myindent}
\newlength{\myindentdelta}
\newlength{\mytextwidth}
\newcounter{myenum}             
\newenvironment{myenum}{
  \setlength{\myindent}{0pt}
  \setlength{\parskip}{0ex}
  \begin{list}
    {\small\arabic{myenum}.}{\usecounter{myenum}
      \setlength{\topsep}{0ex}
      \setlength{\parskip}{0ex}
      \setlength{\partopsep}{0ex}
      \setlength{\leftmargin}{1em}
      \setlength{\rightmargin}{0em}
      \setlength{\itemsep}{1ex}
      \setlength{\parsep}{0ex}
      \setlength{\listparindent}{0ex}
      \setlength{\labelwidth}{0.65em}
      \setlength{\labelsep}{0.2em}
      \setlength{\itemindent}{0em}
    }
  }{\end{list}}
\renewcommand{\N}{{\ensuremath{\cal N}}}
\newcommand{\U}{{\ensuremath{\cal U}}}
\newcommand{\W}{{\ensuremath{\cal R}_\U}}
\renewcommand{\R}{{\ensuremath{\cal R}}}
\renewcommand{\P}{\ensuremath{( n^{4\alpha} \sqrt{\alpha \ln n})}}
\newcommand{\Plog}{\ensuremath{( n^{4\alpha} \sqrt{\alpha \log n})}}
\newcommand{\PP}{\ensuremath{( n^{4\alpha+1}\sqrt{\alpha \ln n})}}
\newcommand{\PPlog}{\ensuremath{( n^{4\alpha+1}\sqrt{\alpha \log n})}}
\renewcommand{\H}{\ensuremath{\sqrt{\alpha n \ln n}}}
\newcommand{\Hlog}{\ensuremath{\sqrt{\alpha n \log n}}}
\newcommand{\cclass}[1]{\textsf{#1}}
\begin{document}
\begin{frontmatter}
\title{ Hamming Approximation of NP Witnesses}
\author[sheldon]{Daniel Sheldon\thanks{Supported by NSF Award IIS-1125098.}}
\author[young]{Neal E. Young\thanks{Supported by NSF Award CCF-1117954.}}
\begin{abstract}
  {\em Given a satisfiable 3-SAT formula,
  how hard is it to find an assignment to the variables 
  that has Hamming distance at most $n/2$ to a satisfying assignment?}
  More generally, consider any polynomial-time verifier for any \cclass{NP}{}-complete language.
  A {\em $d(n)$-Hamming-approximation algorithm} for the verifier 
  is one that, given any member $x$ of the language, 
  outputs in polynomial time a string $a$ with Hamming distance 
  at most $d(n)$ to some witness $w$, where $(x,w)$ is accepted by the verifier.
  Previous results have shown that, if \cclass{P}{}$\ne$\cclass{NP}{}, every \cclass{NP}{}-complete language has a verifier
  for which there is no $(n/2-n^{2/3+\delta})$-Hamming-approximation algorithm,
  for various constants $\delta\ge 0$.

  Our main result is that, if \cclass{P}{}$\ne$\cclass{NP}{}, then every paddable \cclass{NP}{}-complete language
  has a verifier that admits no $(n/2+O(\sqrt{n\log n}))$-Hamming-approximation
  algorithm.
  That is, one can't get even {\em half} the bits right.
  We also consider {\em natural} verifiers for various well-known \cclass{NP}{}-complete problems.
  They do have $n/2$-Hamming-approximation algorithms,
  but, if \cclass{P}{}$\ne$\cclass{NP}{}, have no $(n/2-n^\epsilon)$-Hamming-approximation algorithms
  for any constant $\epsilon>0$.

  We show similar results for randomized algorithms.
\end{abstract}
\tocacm{F.1.3}
\tocams{68Q17}
\tockeywords{complexity theory, inapproximability, approximation algorithms, Hamming distance}
\end{frontmatter}


\newpage
\section{Introduction}

Consider the discrete tomography problem.  An instance is specified by
numerous two-dimensional x-ray images, each formed by x-raying a
three-dimensional object at some angle.  A solution is
a description of the internal structure of the object,
specifying the density of matter (0 or 1) within each voxel in the object.  
Given sufficiently many x-ray images, the internal structure can be
determined uniquely, yet computing it exactly is (in general)
\cclass{NP}{}-complete.  What form of approximate solution is appropriate?
One standard measure of an approximate solution would be
the extent to which it yields approximately the same x-ray images.
By this measure, a ``good'' approximate solution can have a very
different internal structure than any feasible solution.
If the goal is to discover the internal structure of the object,
then the {\em Hamming distance} to a feasible solution is a more appropriate metric.
This paper is about the computational complexity of computing (possibly infeasible) solutions
that have small Hamming distance to feasible solutions, for various \cclass{NP}{}-complete problems.

\subsection{Previous results}

\paragraph{Definition of Hamming approximation.}
Here we follow~\cite{Feige00hardness,Kumar99proofs} conceptually, but with different terms.
By definition, every \cclass{NP}{} language $L$ has a {\em verifier} $V$, 
such that $L= \{ x : V(x,w) \mbox{ accepts for some } w\}$,
where $V$ runs in time polynomial in $|x|$.
A string $w$ such that $V(x,w)$ accepts is a {\em witness} for $x\in L$.  

A {\em $d(n)$-Hamming-approximation algorithm} for a verifier $V$
(or just a {\em $d(n)$-approximation algorithm}, since this is the only
form of approximation considered here) is an algorithm that, given any
input $(x,n)$, outputs an $n$-bit string $a$ that has Hamming distance 
at most $d(n)$ to some $n$-bit witness $w$ such that $V(x,w)$ accepts,
as long as there is such a witness.
If randomized, the algorithm is a $d(n)$-approximation algorithm for $V$
{\em with probability $p(n)$} provided it outputs such an $a$ with
probability at least $p(n)$.

\paragraph{Arbitrary verifiers are hard to approximate  within $n/2-n^{2/3+\epsilon}$.}
Kumar and Sivakumar were the first to study the hardness of
Hamming approximation~\cite{Kumar99proofs}. 
They showed that if \cclass{P}{}$\ne$\cclass{NP}{}
every \cclass{NP}{}-complete language has {\em some} verifier 
for which there is no $d(n)$-approximation algorithm,
where $d(n)=n/2 - n^{4/5+\delta}$ for some $\delta>0$.
Their verifiers do not use the ``natural'' witness $w$,
but instead use an error-correcting encoding of $w$,
so that even if nearly half the bits of the encoding are corrupted
$w$ can still be recovered.

The hardness threshold $d(n)$ was increased to $n/2-n^{3/4+\epsilon}$ for any $\epsilon > 0$
by Guruswami and Sudan~\cite{guruswami2000list}, 
then increased further to $n/2-n^{2/3+\epsilon}$ for any $\epsilon > 0$
by Guruswami and Rudra~\cite{guruswami2011soft}
via the construction of stronger codes.
Guruswami and Rudra also showed that this approach
cannot prove a hardness threshold above $n/2 - \Theta(\sqrt{n \log n})$.
(Note:~\cite{guruswami2005list,guruswami2011soft}
cite an unpublished draft of the current paper from 2003~\cite{sheldon2003}
that contained the core theorem from the current paper.)
 
\paragraph{Many natural verifiers are hard to approximate within $n/2-n^{1-\delta}$.}
Feige et al.~\cite{Feige00hardness} considered the hardness of approximating 
natural verifiers for specific well-known \cclass{NP}{}-complete problems.
(As an example of a ``natural'' verifier, the natural verifier for 3-SAT 
uses witnesses that are $n$-bit strings, each bit encoding the truth value of one of $n$ variables.)
A motivating application was to give evidence
that a SAT algorithm by Sch{\"o}ning couldn't be sped up in a particular way.
Feige et al.\ leverage~\cite{Kumar99proofs},
using amplification arguments to extend the results to natural verifiers.
Assuming \cclass{P}{}$\ne$\cclass{NP}{},
for some small $\delta > 0$,
for many standard \cclass{NP}{}-complete problems,
they showed that the natural verifier
has no $(n/2-n^{1-\delta})$-approximation algorithm
(even one that works with probability $1/n^c$, assuming \cclass{RP}{}$\ne$\cclass{NP}{}).
This result is weaker than previous results in that the hardness threshold is lower, 
but stronger in that it holds for natural verifiers.


\paragraph{Other related work.}
G{\'a}l et al.~\cite{gal1999computing} 
studied self-reductions of \cclass{NP}{}-complete problems to their variants that require
computing a partial witness.
For example, their Thm.~1 says that, given a SAT instance $\psi$ on $n$ variables, 
one can construct another SAT instance $\psi'$ on $N=n^{O(1)}$ variables  
such that (with high probability), given the values of any $N^{1/2 + \epsilon}$ of the $N$ variables 
in a satisfying assignment for $\psi'$, one can
compute in polynomial time a satisfying assignment for $\psi$.
(Such results imply that,
if \cclass{RP}{}$\ne$\cclass{NP}{}, then,
given a SAT instance $\psi$ on $n$ variables,
{\em one cannot compute in polynomial time
a partial assignment, to any $n^{1/2+\epsilon}$ of the variables,
that has an extension to a satisfying assignment.}
However, an easy direct argument shows a stronger result:
assuming \cclass{P}{}$\ne$\cclass{NP}{}, given any SAT instance, {\em one cannot compute in
polynomial time a partial assignment, to even any single variable, 
that has an extension to a satisfying assignment.}  This is simply because, 
by fixing the value of that one variable correctly, simplifying the formula and 
iterating, one could compute a satisfying assignment.)

\medskip
All of the previous works above rely directly or indirectly on error-correcting codes.

\subsection{New results}

\paragraph{The ``universal'' \cclass{NP}{} verifier is hard to approximate within $n/2+O(\sqrt{n\log n})$.}
Our core theorem
(Thm.~\ref{theorem:main})
concerns the hardness of approximating the standard verifier
$V_\U$ for the ``universal'' \cclass{NP}{}-complete language $\U$ (to be defined shortly).
It gives the  first hardness result {\em above} the $n/2$ barrier:
\begin{quote}
  \em
  Fix any constant $\alpha>0$.
  If the verifier $V_\U$ has an $(n/2 + \Hlog)$-Hamming-approximation algorithm $A_\U$, then \cclass{P}{}=\cclass{NP}{}.
  If the verifier has a randomized $(n/2+\Hlog)$-Hamming approximation algorithm
  that works with probability $1-O(1/\PP)$, then \cclass{RP}{}=\cclass{NP}{}.
\end{quote}
The basic idea for the proof is as follows.
Given a potential instance of $\U$, to decide in polynomial time whether the instance is in $\U$,
run $A_\U$ to find a string $a$ such that,
if any witness exists, then some witness must be ``close'' to $a$.
Eliminate from the universe of potential witnesses all strings that are too far from $a$.
This is a polynomial fraction of the universe (Lemma~\ref{lemma:rand}).
Repeat, each time eliminating a polynomial fraction of the remaining candidates.
After polynomially many iterations, the universe of remaining candidates
will have polynomial size.  Use the verifier on each to check if it's a witness.

Of course, calling $A$ with the {\em same} input repeatedly
won't eliminate additional candidates.  Instead we modify the verifier each iteration,
mapping all $u$ remaining candidate strings to a more ``compact'' set:
the $u$ lexicographically smallest bit strings (among those with $\lceil\log_2 u\rceil$ bits),
and recursing on the compacted universe of candidates.

\paragraph{Definition of the universal \cclass{NP}{}-complete language $\U$.}
Before discussing further results, here are the definitions
of the universal \cclass{NP}{}-complete language $\U$ and its natural verifier,
and three  relevant properties.

Language $\U$ contains
the triples $(V,x,1^t)$ where $V$ is (the encoding of) any verifier,
$x$ is a string,
and $1^t$ is a ``padding'' string of $t$ ones,
such that, for some {\em witness} $w$ of length at most $t$,
$V(x,w)$ accepts within $t$ steps.
The verifier $V _\U$ takes as input any pair $((V,x,1^t),w)$
and accepts if $V(x,w)$ accepts within $t$ steps.
The following facts are well known:
\begin{fact}\it
  $\U$ is \cclass{NP}{}-complete and $V_\U$ is a verifier for $\U$.
\end{fact}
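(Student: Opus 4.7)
The plan is to verify three things: that $V_\U$ is a legitimate polynomial-time verifier for $\U$, that $\U$ lies in \cclass{NP}{}, and that every language in \cclass{NP}{} reduces to $\U$ in polynomial time. The first two are essentially unpacking the definitions, while the third is the textbook reduction that turns the padding slot $1^t$ into a universal time/witness budget.

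First I would confirm that $V_\U$ runs in time polynomial in the length of its input $((V,x,1^t),w)$. The verifier simply invokes a fixed universal Turing machine to simulate $V$ on $(x,w)$ for at most $t$ steps and accepts iff that simulation accepts. A universal simulator incurs only a polynomial overhead in $|V|$ and $t$, and both $|V|$ and $t$ are bounded by the length of the encoding of $(V,x,1^t)$, so $V_\U$ is polynomial-time. By construction, $(V,x,1^t)\in\U$ iff there is some $w$ of length at most $t$ such that $V_\U((V,x,1^t),w)$ accepts, which is exactly the statement that $V_\U$ is a verifier for $\U$. Since any accepting witness $w$ has length at most $t$, and $t$ is bounded by the input length, this also shows $\U\in\cclass{NP}{}$.

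Next I would show \cclass{NP}{}-hardness by a direct reduction. Let $L\in\cclass{NP}{}$, so there exist a polynomial-time verifier $V_L$ and polynomials $p,q$ such that $x\in L$ iff some $w$ with $|w|\le p(|x|)$ satisfies $V_L(x,w)$ accepts within $q(|x|)$ steps. Given $x$, the reduction outputs the triple $(V_L, x, 1^{t(|x|)})$, where $t(|x|) := \max\{p(|x|), q(|x|)\}$. This mapping is clearly polynomial-time computable. Moreover, $x\in L$ iff some $w$ of length at most $t(|x|)$ makes $V_L(x,w)$ accept within $t(|x|)$ steps, i.e., iff $(V_L,x,1^{t(|x|)})\in\U$. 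Hence $L$ reduces to $\U$ in polynomial time, so $\U$ is \cclass{NP}{}-hard; combined with $\U\in\cclass{NP}{}$, this gives \cclass{NP}{}-completeness.

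There is no real obstacle here: the only point that might deserve slightly more care is the polynomial overhead in the universal simulation of $V$, and that is a standard fact about universal Turing machines. Everything else is simply reading the definitions of $\U$ and $V_\U$ in parallel.
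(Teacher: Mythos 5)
The paper offers no proof of this fact; it simply states it as ``well known.'' Your argument is correct and is the standard textbook proof: polynomial-time universal simulation shows $V_\U$ is a legitimate verifier and that $\U\in\cclass{NP}$ (with witness length bounded by $t$, hence by the input length), and the reduction $x\mapsto(V_L,x,1^{t(|x|)})$ with $t$ large enough to cover both witness size and verification time gives \cclass{NP}{}-hardness. There are no gaps; this is exactly the implicit argument the paper is relying on.
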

It is easy to see that $V_\U$ 
is also as hard to Hamming approximate as any other \cclass{NP}{} verifier:
\begin{fact}\label{fact:hardest}\it
  If there is a $d(n)$-Hamming-approximation algorithm
  (with probability $p(n)$) for $V_\U$,
  then, for any verifier $V$ for any language in \cclass{NP}{},
  there is a $d(n)$-Hamming approximation algorithm
  (with probability $p(n)$) for $V$.
\end{fact}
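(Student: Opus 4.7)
The plan is to give a direct polynomial-time reduction from the Hamming-approximation task for an arbitrary \cclass{NP}{} verifier $V$ to that for the universal verifier $V_\U$, and to observe that the reduction preserves the witness length $n$ (and hence the approximation guarantee $d(n)$) exactly.

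Concretely, suppose $V$ is a verifier for some language in \cclass{NP}{}, and fix a polynomial $q$ that upper-bounds the running time of $V$ on any input $(x,w)$ with $|w|\le n$. Given an input $(x,n)$ to the Hamming-approximation algorithm we want to build for $V$, I would construct the $\U$-instance $y = (V,x,1^{t})$ with $t=\max\{n,\,q(|x|+n)\}$, then simply call the hypothesized Hamming-approximation algorithm for $V_\U$ on the pair $(y,n)$ and return whatever it returns. The entire reduction is syntactic and runs in time polynomial in $|x|+n$.

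The crux of the argument is that, because $t\ge n$ and $t$ bounds the running time of $V$ on $(x,w)$ for every $w$ with $|w|\le n$, the set of $n$-bit witnesses accepted by $V_\U(y,\cdot)$ coincides exactly with the set of $n$-bit witnesses accepted by $V(x,\cdot)$. Hence whenever $V(x,\cdot)$ has an $n$-bit witness, so does $V_\U(y,\cdot)$; the hypothesized algorithm then returns an $n$-bit string within Hamming distance $d(n)$ of some $V_\U$-witness, and that same string is automatically within distance $d(n)$ of a $V$-witness. I don't expect any real obstacle beyond this verification: since the reduction preserves both the witness length and the event that the underlying approximation algorithm succeeds, the success probability $p(n)$ transfers verbatim to the randomized version of the claim.
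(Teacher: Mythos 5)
Your proposal is correct and is essentially the same argument the paper gives: map $(x,n)$ to the $\U$-instance $(V,x,1^t)$ for a suitably large $t$, invoke the $V_\U$-approximation algorithm, and observe that the $n$-bit witness sets of $V(x,\cdot)$ and $V_\U((V,x,1^t),\cdot)$ coincide so the Hamming guarantee and success probability transfer verbatim. The only difference is that you spell out the choice $t=\max\{n,q(|x|+n)\}$ explicitly where the paper says ``for an appropriately chosen $t$.''
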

Fact~\ref{fact:hardest} follows directly from the fact 
that $V$ accepts $(x,w)$ if and only if $V_\U$ accepts $((V, x, 1^t), w)$,
for an appropriately chosen $t$.
So, if a string $a$ approximates a witness $w$ such that $V_\U((V,x,1^t),w)$ accepts,
then $a$ equally well approximates a witness (the same one) such that $V(x,w)$ accepts.

\eat{
\begin{proof}
  Let $M_L$ be a Turing machine that decides the witness relation
  $(x,w)\in \R _L$ in polynomial time $t(|x|)$.
  Let $1^t$ be a string of $t(|x|)$ 1's.

  Given $(x,n)$, the algorithm $A_L$ does the following:
  run $A_\U$ on $(M_L,x,t,n)$ and return the result.
  This takes polynomial time because $A_\U$ takes
  polynomial time and $t = t(|x|)$ is polynomial in $|x|$.

  By the definition of the universal language,
  the size-$n$ witnesses for $x\in L$
  are exactly the size-$n$ witnesses for $(M_L,x,t)\in\U$.
  Because the witness sets are identical, the Hamming-distance
  guarantee of $A_\U(M_L, x, t, n)$ carries over directly to $A_L(x, n)$.
\end{proof}
}

\smallskip
A converse of Fact~\ref{fact:hardest} holds
for {\em paddable} \cclass{NP}{}-complete languages.
(In 1977, Berman and Hartmanis observed that all known natural \cclass{NP}{}-complete languages
are paddable and conjectured that all \cclass{NP}{}-complete languages are~\cite{berman1977isomorphisms}.
The conjecture is still open.
Formally, language $L$ is {\em paddable} if there are polynomial-time
computable functions $\mbox{pad}(x,p)$ and $\mbox{extractPad}(x)$ such that,
for all strings $x$ and $p$, (1) $\mbox{pad}(x,p) \in L$ iff $x \in L$,
and (2) $p = \mbox{extractPad}(\mbox{pad}(x,p))$.)

Here is the partial converse of Fact~\ref{fact:hardest}.
The reader may safely skip the proof,
which is a standard exercise given only for completeness.
\begin{fact}\label{fact:converse}\it
  For every paddable \cclass{NP}{}-complete language $L$,
  there is a verifier $V$ that is as hard to approximate as $V_\U$;
  that is,
  if $V$ has a $d(n)$-Hamming-approximation algorithm
  (working with probability $p(n)$)
  then $V_\U$ has a $d(n)$-Hamming-approximation algorithm
  (working with probability $p(n)$).
\end{fact}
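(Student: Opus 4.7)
The plan is to explicitly construct the verifier $V$ for $L$ by composing two ingredients the hypotheses supply: a polynomial-time many-one reduction $f$ from $\U$ to $L$ (which exists since $L$ is \cclass{NP}{}-complete) and the paddability functions $\mbox{pad}, \mbox{extractPad}$ for $L$. The goal is to arrange things so that, for every $L$-instance $y'$ obtained by padding a reduced $\U$-instance, the set of length-$n$ witnesses of $y'$ under $V$ coincides \emph{bit-for-bit} with the set of length-$n$ witnesses of the original $\U$-instance under $V_\U$. A $d(n)$-Hamming-approximator for $V$ would then transport to one for $V_\U$ with no loss.

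I would proceed in four steps. First, fix the reduction $f$ and the padding functions. Second, define $V$ on input $(y',w)$ as follows. Compute $p = \mbox{extractPad}(y')$ and the depadded string $y$; attempt to parse $p$ as the encoding of a $\U$-instance $I = (V_{\mathrm{input}},x,1^t)$ together with a length parameter $n$. If the parse succeeds, $f(I)=y$, and $|w|=n$, accept iff $V_{\mathrm{input}}(x,w)$ accepts within $t$ steps; otherwise fall back to some fixed standard verifier $V_L^{\mathrm{std}}$ for $L$ evaluated on $(y',w)$. Third, check that $V$ is indeed a verifier for $L$: in the special case, correctness of the reduction gives $I\in\U$ iff $y\in L$, paddability gives $y\in L$ iff $y'\in L$, and by construction the witnesses of $(y',\cdot)$ under $V$ are exactly the witnesses of $(I,\cdot)$ under $V_\U$ of the chosen length, so $V$ accepts some $(y',w)$ iff $y'\in L$; in every other case, $V_L^{\mathrm{std}}$ handles $L$-membership by definition. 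Fourth, build the approximator $A_\U$ for $V_\U$ from the assumed approximator $A$ for $V$: on input $(I,n)$, compute $y = f(I)$, set $y' = \mbox{pad}\bigl(y,\langle I,n\rangle\bigr)$, run $A$ on $(y',n)$, and return its output $a$. Because the two witness sets agree, the Hamming bound (and the probability bound, in the randomized case) carries over unchanged.

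The main thing to be careful about is making $V$ a bona fide verifier for $L$ rather than for some perturbed language, while still behaving ``helpfully'' on the special instances. Two technical safeguards anchor this: the equality check $f(I)=y$ rules out adversarially crafted $y'$ whose padding happens to encode a $\U$-instance unrelated to $y$, and the fallback to $V_L^{\mathrm{std}}$ guarantees correct $L$-membership on every input not of the special form. Both checks run in polynomial time since $f$ does, so $V$ remains a polynomial-time verifier; the length-bookkeeping is routine because the intended witness length $n$ is carried explicitly inside the padding.
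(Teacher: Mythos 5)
Your construction takes a genuinely different route from the paper's: the paper simply invokes the Berman--Hartmanis isomorphism theorem (any two paddable \cclass{NP}{}-complete languages are polynomial-time isomorphic via some bijection $\psi$) and then defines $V(x,w)$ to accept iff $V_\U(\psi^{-1}(x),w)$ accepts. Because $\psi$ is a bijection, $V$ is automatically a correct verifier for $L$ with exactly the right witness sets, and the approximation transfer is immediate. You instead build $V$ by hand from a many-one reduction $f:\U\le_m^p L$ and the paddability functions, with a fallback verifier. That is a reasonable and more elementary alternative, but as written it has two gaps.

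First, a definitional one: your $V$ ``computes $p = \mbox{extractPad}(y')$ and the depadded string $y$,'' but the paper's definition of paddability provides only $\mbox{pad}$ and $\mbox{extractPad}$; there is no guaranteed way to recover the original $y$ from $y'$. This is easy to repair --- replace the test ``$f(I)=y$'' with ``$\mbox{pad}(f(I),p)=y'$'' --- but as stated the verifier is not well-defined from the hypotheses given.

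Second, and more substantively, the safeguard you identify as ``the main thing to be careful about'' does not actually secure verifier correctness. Your special branch is gated by the conjunction ``parse succeeds and $f(I)=y$ \emph{and} $|w|=n$,'' so whether the fallback fires depends on the length of $w$, not just on the instance $y'$. Take a $y'$ of special form with $I\in\U$: you need \emph{some} $w$ with $V(y',w)$ accepting. The special branch only accepts length-$n$ strings that are $V_\U$-witnesses for $I$, and $I\in\U$ need not have any witness of length exactly $n$; the fallback branch only gets to examine strings of length $\neq n$, and nothing you assumed rules out that $V_L^{\mathrm{std}}$'s witnesses for this particular $y'$ all have length $n$. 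In that case $V$ rejects every $(y',w)$ while $y'\in L$, so $V$ is not a verifier for $L$ and the Fact's conclusion doesn't follow. Your claim that ``the fallback \ldots guarantees correct $L$-membership on every input not of the special form'' glosses over this, because ``the special form'' is a property of the pair $(y',w)$ rather than of $y'$ alone. The cleanest fix is to make the branch condition depend only on $y'$ (drop ``$|w|=n$'' entirely and accept on the special branch iff $V_\U(I,w)$ accepts for the given $w$ of whatever length); the $d(n)$-approximation semantics already restrict attention to length-$n$ witnesses, so nothing is lost, and now every special-form $y'\in L$ has a witness because $I\in\U$ has a witness of \emph{some} length. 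With those two repairs the argument goes through and gives an elementary alternative to the isomorphism-theorem proof; as written, it does not quite establish that $V$ verifies $L$.
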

\begin{proof}  
  Fix $L$.
  By the Berman-Hartmanis isomorphism theorem~\cite[p.312]{berman1977isomorphisms},
  since both $L$ and $\U$ are paddable,
  there exists a polynomial-time isomorphism $\psi$ from strings over the alphabet of $\U$
  to strings over the alphabet of $L$, that is, a bijection such that 
  $I \in \U$ iff $\psi(I)\in L$,
  where $\psi$ and its inverse are computable in polynomial time.

  Use $\psi$ and verifier $V_\U$ to create the verifier $V$ for $L$ as follows:
  {\em $V(x,w)$ accepts iff $V_\U(\psi^{-1}(x), w)$ accepts.}
  This $V$ is a polynomial-time verifier for $L$,
  because $x\in L$ iff $\psi^{-1}(x)\in \U$.

  Now, let $A$ be any approximation algorithm for verifier $V$.
  Use $A$ to create the approximation algorithm $A_\U$ for $V_\U$:
  {\em $A_\U(I,n)$ simply returns $A(\psi(I), n)$}.
  It is easy to verify that $A_\U$ runs in polynomial time,
  and that (for any $I\in\U$ and appropriate $n$) the string $a$ returned by $A_\U(I,n)$
  approximates a witness $w$ such that $V(\psi(I),w)$ accepts,
  and that (by the choice of $V$) this $w$ is also accepted by $V_\U(I,w)$.
  Thus, $A_\U$ approximates $V_\U$ 
  just as well as $A$ approximates $V$.
\end{proof}
With the definitions and properties related to $\U$ noted, we return to discussing new results.

\paragraph{Arbitrary verifiers are hard to approximate within $n/2+O(\sqrt{n\log n})$.} 
By Fact~\ref{fact:converse},
the hardness of approximating $V_\U$
(Thm.~\ref{theorem:main})
extends immediately
to every paddable \cclass{NP}{}-complete language $L$,
for {\em some} verifier $V$ for $L$:
\begin{corollary}\label{corollary:main}
  Fix any $\alpha>0$.
  For any paddable \cclass{NP}{}-complete language $L$,
  there exists a verifier $V$ for $L$ with the following properties.
  If \cclass{P}{}$\ne$\cclass{NP}{}, verifier $V$ has no $(n/2+ \Hlog)$-Hamming-approximation algorithm.
  If \cclass{RP}{}$\ne$\cclass{NP}{}, verifier $V$ has no $(n/2+ \Hlog)$-Hamming-approximation algorithm that works
  with probability $1-O(1/\PPlog)$.
\end{corollary}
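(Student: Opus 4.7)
The plan is to derive the corollary as an immediate consequence of two results already in hand: Theorem~\ref{theorem:main}, which establishes the Hamming-approximation hardness of the universal verifier $V_\U$, and Fact~\ref{fact:converse}, which says that for any paddable \cclass{NP}{}-complete language $L$ there is a specific verifier $V$ for $L$ that is at least as hard to approximate as $V_\U$. Composing these two transfers the hardness from $V_\U$ down to $V$, which is exactly what the corollary asserts.

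Concretely, I would fix an arbitrary paddable \cclass{NP}{}-complete language $L$ and apply Fact~\ref{fact:converse} to obtain a verifier $V$ for $L$ with the following transfer property: any $d(n)$-Hamming-approximation algorithm for $V$, working with probability $p(n)$, lifts to a $d(n)$-Hamming-approximation algorithm for $V_\U$ with the same probability $p(n)$. I would then argue both statements by contrapositive. For the deterministic claim: if $V$ admitted an $(n/2+\Hlog)$-Hamming-approximation algorithm, the lifted algorithm would contradict the first half of Theorem~\ref{theorem:main} unless \cclass{P}{}=\cclass{NP}{}. The randomized claim is identical, with success probability $1-O(1/\PPlog)$: the lift yields a randomized $(n/2+\Hlog)$-Hamming-approximation algorithm for $V_\U$ of the same success probability, which contradicts the second half of Theorem~\ref{theorem:main} unless \cclass{RP}{}=\cclass{NP}{}.

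There is no substantial obstacle here, since the two invoked results do all the work. The only point worth checking carefully is that the reduction inside Fact~\ref{fact:converse} preserves both the witness length $n$ and the success probability $p(n)$ verbatim, so that the quantitative thresholds $\Hlog$ and $1-O(1/\PPlog)$ propagate unchanged from Theorem~\ref{theorem:main} to the corollary. Inspection of the proof of Fact~\ref{fact:converse} confirms both: $A_\U(I,n)$ simply forwards the same $n$ to $A(\psi(I),n)$, so each output event of $A$ is in bijection with an output event of $A_\U$, and the witness length is not rescaled by the Berman--Hartmanis isomorphism step.
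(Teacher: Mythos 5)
Your proposal is correct and takes essentially the same route as the paper: the paper derives the corollary directly by composing Fact~\ref{fact:converse} with Theorem~\ref{theorem:main}, exactly as you do. Your extra check that the reduction in Fact~\ref{fact:converse} preserves the witness length $n$ and the success probability $p(n)$ is the one point worth verifying, and you verify it correctly (the only remaining discrepancy between the theorem's $\ln n$ and the corollary's $\log n$ is absorbed into the constant $\alpha$, which both statements quantify over).
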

Previous works proved a hardness threshold of $n/2-n^{2/3+\epsilon}$, strictly below $n/2$,
so in this sense Corollary~\ref{corollary:main} strengthens those results.
Corollary~\ref{corollary:main} is weaker in that previous works 
did not require paddability of $L$,
but recall that to date no \cclass{NP}{}-complete language is known to be not paddable.

\paragraph{Many natural verifiers can be approximated within $n/2$.} 
Does a hardness threshold above $n/2$ hold for more natural verifiers?
For many natural \cclass{NP}{}-complete problems the answer is no:
their natural verifiers do have $n/2$-approximation algorithms.
Observation~\ref{observation:vc} 
describes trivial $n/2$-approximation algorithms for
the decision problems for many unweighted \cclass{NP}{} optimization problems over sets,
including the decision problems 
for the unweighted variants of Set Cover, Vertex Cover, Independent Set, and Clique.
Theorem~\ref{theorem:vc} gives combinatorial 
$n/2$-approximation algorithms for 
Vertex Cover, Independent Set, and Clique as {\em optimization problems}.
(E.g., {\em given a graph, compute a minimum-weight vertex cover}.
Approximation for optimization problems
is a-priori harder than for decision problems,
as the budget for the objective function is not given.
Enumerating possible budgets doesn't reduce optimization to decision,
because successful approximation is not easily verified.)

\paragraph{Many natural verifiers are hard to approximate within $n/2-n^{\epsilon}$.}
For many natural verifiers, 
recall that the best previous hardness results are due to Feige et al.,
who show a hardness threshold of $n/2-n^{1-\delta}$ for a fixed small $\delta>0$.
Our last set of results 
(Theorems~\ref{theorem:SAT}, \ref{theorem:other}, \ref{theorem:HAMPATH})
increases this threshold to $n/2-n^{\epsilon}$,
reducing the exponent $1-\delta$ to any constant $\epsilon>0$,
for many of the verifiers they considered.
Unlike the previous works, these proofs do not rely on error correcting codes;
they use only simple amplification arguments similar to those of Feige et al.

\paragraph{Approximating arbitrary verifiers (upper bounds).}
The hardness result for randomized approximation algorithms (Corollary~\ref{corollary:main} part (ii))
is roughly tight in the following sense.
For any verifier $V$ of any \cclass{NP}{}-complete language and any fixed $\alpha>0$,
the naive randomized algorithm (guess a random $n$-bit string) 
is an $(n/2+\Hlog)$-approximation algorithm with probability $1-O(1/\Plog)$.
(Note that the exponent here is $4\alpha$, compared to $4\alpha+1$ in the hardness result.)

Regarding deterministic approximation algorithms,
any verifier has a trivial deterministic $(n-c)$-approximation algorithm for any fixed $c>0$.
This is a weak upper bound, but
it is the best possible for any so-called {\em black-box} algorithm.
See Section~\ref{section:arbitrary}.


\section{The universal verifier is hard to approximate within $n/2+O(\sqrt{n\log n})$}

This section proves our core result:
the hardness of $(n/2+\H)$-approximating
the natural verifier $V_\U$ for the universal \cclass{NP}{}-complete language $\U$.
The proof is based on the fact that, for any fixed $\alpha$
and $n$-bit string $a$,
the number of $n$-bit strings that are {\em not} within Hamming distance $n/2+\H$ from $a$
is at least a polynomial fraction of all $2^n$ strings.
We prove a particular form of this fact next.
The reader can safely skip the proof, which is a standard calculation.

To simplify notation define utility functions
$H(n,\alpha) = \H$
and $P(n,\alpha) = \P$.
\begin{lemma}\label{lemma:rand}
  For any constant $\alpha > 0$,
  the number of $(n-1)$-bit strings that have more than
  $n/2+H(n,\alpha)$ ones is $\Omega(2^{n}/P(n,\alpha))$.
\end{lemma}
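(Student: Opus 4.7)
The plan is to reduce to a binomial tail lower bound. Letting $X \sim \mathrm{Bin}(n-1,1/2)$ count the ones in a uniformly random $(n-1)$-bit string, the quantity to bound is exactly $2^{n-1}\Pr[X > n/2 + H(n,\alpha)]$, so it suffices to show this tail probability is $\Omega(1/P(n,\alpha))$. The mean of $X$ is $(n-1)/2$ with standard deviation $\tfrac{1}{2}\sqrt{n-1}$, so $H(n,\alpha) = \sqrt{\alpha n \ln n}$ sits roughly $2\sqrt{\alpha \ln n}$ standard deviations above the mean. By the Gaussian heuristic the tail probability should be around $\exp(-2\alpha \ln n)/\sqrt{2\alpha \ln n} = \Theta(n^{-2\alpha}/\sqrt{\alpha\ln n})$, so the lemma's target of $1/(n^{4\alpha}\sqrt{\alpha\ln n})$ has polynomial-factor slack---plenty of room for a loose calculation.

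To make this rigorous, I would (i) apply Stirling's formula to produce an absolute constant $c_0 > 0$ such that
\[
\binom{n-1}{\lfloor (n-1)/2 \rfloor + j} \;\geq\; c_0\,\frac{2^{n-1}}{\sqrt{n}}\,\exp\!\left(-\frac{2 j^{2}}{n-1}\right)
\]
for every $|j| = O(\sqrt{n\ln n})$; (ii) sum this lower bound over the window $j \in [H(n,\alpha),\, H(n,\alpha)+L]$ with $L = \lfloor\sqrt{n/(\alpha \ln n)}\rfloor$, noting that throughout this window $2j^{2}/(n-1) = 2\alpha\ln n + O(1)$, so every term is $\Omega(2^{n-1}/(n^{2\alpha}\sqrt{n}))$; and (iii) conclude that the sum is at least $L \cdot \Omega(2^{n-1}/(n^{2\alpha}\sqrt{n})) = \Omega(2^{n-1}/(n^{2\alpha}\sqrt{\alpha\ln n}))$, which is $\Omega(2^{n}/P(n,\alpha))$ since $P(n,\alpha) = n^{4\alpha}\sqrt{\alpha\ln n}$.

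The only real obstacle is bookkeeping. Expanding $2(H(n,\alpha)+t)^{2}/(n-1) = 2\alpha\ln n + 4t\sqrt{\alpha\ln n/n} + 2t^{2}/(n-1) + O(1/n)$, one must verify that the two $t$-dependent terms stay $O(1)$ across the chosen window (they do, since $t \le L = \Theta(\sqrt{n/\ln n})$ makes the middle term $O(1)$ and the last term $O(1/\ln n)$), and that Stirling's relative error term does not destroy the constant $c_0$. Since the lemma's $4\alpha$ exponent is double the natural $2\alpha$ exponent of the binomial tail, even a crude treatment of all constants suffices.
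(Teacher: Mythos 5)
Your proposal is correct and matches the paper's approach in essence: both estimate the tail by summing the $\Theta(\sqrt{n/(\alpha\ln n)})$ binomial coefficients just above the threshold, using Stirling's formula to show each is $\Omega(2^n n^{-O(\alpha)}/\sqrt{n})$, which gives the required $\Omega(2^n/P(n,\alpha))$ total. The paper differs only cosmetically, working first with $n$-bit strings and then transferring to $(n-1)$-bit strings via a factor-of-two argument, and controlling the window by bounding ratios of consecutive terms rather than bounding each term directly with a uniform Gaussian-form lower bound; your calculation also correctly notes the extra slack, since the true exponent is $2\alpha$ while the lemma demands only $4\alpha$.
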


\begin{proof}
  Assume for simplicity of notation that $n$ is even.
  (The case when $n$ is odd is similar.)
  We first prove a weaker claim:
  {\em the number of {\em $n$-bit} strings that have more than
    $n/2+H(n,\alpha)$ ones is $\Omega(2^n/P(n,\alpha))$}.
  Then we extend the proof to prove the lemma.

  Let $H=1+\lfloor H(n,\alpha)\rfloor = \Theta(\sqrt{n\log n})$.  
  Let $p_i = {n \choose n/2 +i}$.  
  The number in question in the claim equals $\sum_{i=H}^{n/2} p_i$.  
  We lower-bound the value of this sum.
  Consider the ratio between any two successive terms:
  \[\frac{p_{i+1}}{p_i}
  \,=\,
  \frac{1 - 2i/n}{1+2(i+1)/n}.
  \]
  For $i \le H +n/H = O(H)$, the ratio is at least $1-O(H/n)$,
  so $p_{(H+n/H)}$ (considering the product of the first $n/H+1$ ratios,
  for $i\in\{H,H+1,\ldots,H+n/H\}$) 
  is at least $p_H (1-O(H/n))^{O(n/H)} = \Omega(p_H)$.
  Thus, the first $n/H$ terms in the sum total $\Omega((n/H) p_H)$.
  A calculation (details below) shows that $p_H = \Omega(2^n/n^{4\alpha+1/2})$.
  Thus, the sum is $\Omega((n/H) 2^n/n^{4\alpha+1/2})$.
  Plugging in the definitions of $H$ and $P$ and simplifying,
  this value is $\Omega(2^{n}/P(n,\alpha))$, proving the claim.

  To prove the lemma, first observe that, following the above calculations,
  the number of $n$-bit strings with at least $n/2+H+1$ ones
  (i.e.\ $\sum_{i=H+1}^{n/2} p_i$) is also $\Omega((n/H) p_H) = \Omega(2^{n}/P(n,\alpha))$.
  Deleting the first bit from any such string $w$ yields an $(n-1)$-bit string $w'$
  that must have at least $n/2+H$.
  Each such string $w'$ is obtained from at most two strings $w$ in this manner.
  Thus, the number of $(n-1)$-bit strings with $n/2+H$ or more ones
  is at least half the number of $n$-bit strings with $n/2+H+1$ or more ones.
  Thus, both numbers are $\Omega(2^{n}/P(n,\alpha))$, proving the lemma.

  Here are the promised details of the calculation that $p_H = \Omega(n^{-4\alpha-1/2})$:
  \[
  \frac{2^n}{p_H\sqrt{n}} ~=~
  \frac{2^n}{{n \choose n/2+H}\sqrt n}
  ~ = ~
  O\left(\left(1-2H/n\right)^{n/2-H}
    \left(1+2H/n\right)^{n/2+H}\right)
  ~ = ~  O\left(\exp(4H^2/n)\right)
  ~ = ~  O\left(n^{4\alpha}\right).
  \]
  The second step uses Stirling's approximation, $k! = \Theta((k/e)^k\sqrt{k})$,
  to simplify ${n\choose n/2+H}$, followed by a careful simplification of terms.
  The third step uses $(1+a)^b \le \exp(ab)$ when $|a|<1$.
\end{proof}
(We remark without proof that the bounds in Lemma~\ref{lemma:rand} are tight
up to constant factors:
a random $n-1$-bit string has more than $n/2+H(n,\alpha)$ ones
with probability $\Theta(1/P(n,\alpha))$.

\medskip
Here is the core theorem.
Recall
$H(n,\alpha) = \H$,
$P(n,\alpha) = \P$,
and $V_\U$ is the verifier for the universal \cclass{NP}{}-compete language $\U$.

\begin{theorem}\label{theorem:main} Fix constant $\alpha > 0$.

  \smallskip

  \noindent
  {\em (i)} If there is an $(n/2+H(n,\alpha))$-Hamming-approximation algorithm $A_\U$
  for $V_\U$, then \cclass{P}{}=\cclass{NP}{}.

  \smallskip
  \noindent
  {\em (ii)} If there is an $(n/2+H(n,\alpha))$-Hamming-approximation algorithm $A_\U$
  for $V_\U$ that works with probability $1-O(1/(n\, P(n,\alpha)))$, then \cclass{RP}{}=\cclass{NP}{}.
\end{theorem}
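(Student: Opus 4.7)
The plan is to use the hypothesized $A_\U$ as a subroutine in an iterative procedure that shrinks the set of candidate witnesses by a polynomial factor each round, until only polynomially many candidates remain, and then test each directly with $V_\U$. The quantitative hook is Lemma~\ref{lemma:rand}: any Hamming ball of radius $n/2 + H(n,\alpha)$ excludes at least a $1/P(n,\alpha)$ fraction of all $n$-bit strings, so one approximation call yields one polynomial-factor shrinkage.

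For part~(i), given an instance $I = (V,x,1^t) \in \U$ with witness length $n$, I would first call $A_\U(I,n)$ to obtain a string $a_1$. If $I \in \U$ then any witness must lie in $B_1 = \{w : d(a_1,w) \le n/2 + H(n,\alpha)\}$, and $|B_1| \le 2^n(1 - 1/P(n,\alpha))$. Since calling $A_\U$ again on $I$ would return the same $a_1$, I construct, using $a_1$ and an efficient (un)ranking algorithm for Hamming balls, a new $\U$-instance $I_2$ whose valid witnesses correspond via a polynomial-time bijection $\pi_1$ to the elements of $B_1$, but whose witness length is the smaller $n_2 = \lceil \log_2 |B_1|\rceil$. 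Concretely, the verifier for $I_2$ receives $w' \in \{0,1\}^{n_2}$, decodes it as $w = \pi_1^{-1}(w') \in B_1$, and runs $V_\U(I,w)$. I would recurse this construction to produce $I_3, I_4, \ldots$, each time invoking $A_\U$ on the current instance, reading off a new approximation, and shrinking the universe. After $k = O(n \cdot P(n,\alpha))$ iterations the universe has polynomial size; I enumerate the survivors and test each with $V_\U$. Each iteration and each $A_\U$ call runs in polynomial time, so the whole procedure is polynomial, giving \cclass{P}{}=\cclass{NP}{}.

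Part~(ii) follows from the same reduction applied to the randomized $A_\U$: with $O(n \cdot P(n,\alpha))$ iterations and per-call failure probability $O(1/(n \cdot P(n,\alpha)))$, a union bound yields constant overall success probability, so the procedure is an \cclass{RP}{} decision algorithm for $\U$, and hence \cclass{RP}{}=\cclass{NP}{}.

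The main obstacle is the compactification step. Standard binomial-coefficient recurrences give polynomial-time ranking and unranking inside a single Hamming ball, but the bit length $n_{i+1} = \lceil \log_2 |B_i|\rceil$ need not match $\log_2 |B_i|$ exactly, so the valid indices $\{0, \ldots, |B_i|-1\}$ may occupy only about half of $\{0,1\}^{n_{i+1}}$. In the worst case the strings far from the next approximation $a_{i+1}$ could lie almost entirely in the invalid region, producing no shrinkage. I would address this by carefully choosing the compactification bijection, or by replacing it with a slightly lossy restriction to a $2^{\lfloor \log_2 |B_i|\rfloor}$-size subset of $B_i$, so that Lemma~\ref{lemma:rand}'s counting argument continues to give an $\Omega(1/P(n_{i+1},\alpha))$-fraction shrinkage at each step. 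Verifying that this bookkeeping preserves both the polynomial shrinkage bound and the polynomial-size encoding of each successive $I_i$ is where the main technical work lies.
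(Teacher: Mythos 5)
Your high-level strategy matches the paper's: iterate $A_\U$ to shrink the candidate universe by a polynomial fraction each round (Lemma~\ref{lemma:rand}), re-encode the shrunken universe into shorter witnesses so the next call to $A_\U$ is on a genuinely different instance, and brute-force once the universe is polynomially small. You also correctly identify the one real technical obstacle: after rounding up to $n_{i+1} = \lceil\log_2|B_i|\rceil$ bits, the valid indices may occupy as little as half of $\{0,1\}^{n_{i+1}}$, and the strings far from the next answer $a_{i+1}$ could a priori all land in the invalid region, giving no shrinkage. However, neither of your proposed fixes is carried through, and one is unworkable. Restricting to a $2^{\lfloor\log_2|B_i|\rfloor}$-size subset of $B_i$ cannot be done safely: you have no way to know which elements to discard without risking throwing away the witness. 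The other option, ``carefully choosing the compactification bijection,'' is the right one, and the paper's specific choice is to map $B_i$ onto the $|B_i|$ \emph{lexicographically smallest} $n_{i+1}$-bit strings. Since $|B_i| > 2^{n_{i+1}-1}$, the valid range then always contains \emph{all} strings with leading bit~$0$. Applying Lemma~\ref{lemma:rand} to the trailing $n_{i+1}-1$ bits of $a_{i+1}$ yields $\Omega(2^{n_{i+1}}/P(n_{i+1},\alpha))$ strings $z$ far from those bits, and each $0z$ is then a valid candidate at distance more than $n_{i+1}/2 + H(n_{i+1},\alpha)$ from $a_{i+1}$ --- this is exactly why Lemma~\ref{lemma:rand} is stated for $(n-1)$-bit strings. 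With this choice the shrinkage bound goes through unconditionally. You would still need, as you note, polynomial-time binomial-coefficient ranking/unranking plus a check that each modified verifier's encoding grows only additively per recursion, so that after $O(n\,P(n,\alpha))$ steps everything remains polynomial.

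Your part~(ii) argument has a separate gap. The failure probability of a call to $A_\U$ is $O(1/(n_u\,P(n_u,\alpha)))$ where $n_u$ is the \emph{current} witness length, not the original $n$; since $n_u \le n$ this is \emph{larger} than $O(1/(n\,P(n,\alpha)))$, so your stated per-call bound is too optimistic. Even granting it, ``$O(nP)$ calls times $O(1/(nP))$ failure $= O(1)$'' is not a usable union bound, because the hidden constant may exceed~$1$; and the honest union bound, grouping by $n_u$, gives $\sum_{n_u} O(P(n_u,\alpha)) \cdot O(1/(n_u P(n_u,\alpha))) = O(\log n)$, which is vacuous. The argument that works is a product bound: for each $n_u$, all $O(P(n_u,\alpha))$ calls at that length succeed with probability at least $\bigl(1 - O(1/(n_u P(n_u,\alpha)))\bigr)^{O(P(n_u,\alpha))} \ge \exp(-O(1/n_u))$, so the overall success probability is at least $\exp(-O(\sum_{n_u} 1/n_u)) = n^{-O(1)}$. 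This is positive, and since the procedure has no false positives it amplifies to an \cclass{RP}{} algorithm in the standard way.
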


\renewcommand{\int}{\text{int}}

\begin{proof}
  (i) Assume that there exists $A_\U$ as in the theorem.  We show that $A_\U$
  can be used to decide $\U$ in polynomial time.
  Since $\U$ is \cclass{NP}{}-complete, it follows that \cclass{P}{}=\cclass{NP}{}.

  The basic idea is the following.  Given $(V,x,1^t)$, run $A_\U$ to find a string $a$ such that,
  if any witness exists, then some witness must be ``close'' to $a$.
  Eliminate from the universe of candidates all strings that are too far from $a$.
  By Lemma~\ref{lemma:rand}, this is a polynomial fraction of the universe.
  Repeat, each time eliminating a polynomial fraction of the remaining candidates.
  After polynomially many iterations, the universe of possible candidates
  will have polynomial size.  Use the verifier on each one to check if it's a witness.

  Of course, calling $A$ with the {\em same} input repeatedly
  would not eliminate additional candidates.  Instead we modify the verifier each iteration.
  Here are the details.
  Let $(V,x,1^t)$ be the given possible member of $\U$.
  As the algorithm iterates, it will modify $V$ and $t$,
  and it will keep an additional variable $u$,
  which is the current candidate universe size (initially $2^n$).
  Given the current universe size $u$, the corresponding universe of candidates
  will be the set containing the $u$ lexicographically smallest $n_u$-bit strings,
  where $n_u$ is $\lceil\log_2 u \rceil$.
  Abusing notation, let $[u]$ denote this set.
  (Initially $[u]$ is $[2^n]$, that is, all $n$-bit strings.)
  Given any $n_u$-bit string $a$,
  define $\N(a)$, the {\em neighborhood} of $a$, to contain
  all $n_u$-bit strings whose Hamming distance to $a$ is at most $n_u/2 + H(n_u, \alpha)$.
  (This may include some strings not in the universe $[u]$.)
  The algorithm to decide membership of $(V,x,1^t)$ in $\U$ 
  is in Fig.~\ref{fig:alg}.
  
  \medskip

  \newcommand{\comment}[1]{\hfill {\em\small #1}}

  \begin{figure}[th]
    {\bf alg}$_\U(\mbox{verifier $V$, string $x$, string $1^t$})$:
    \comment{Accepts iff $(V,x,1^t)\in \U$ (that is, iff $\exists w.~(V(x,w)$ accepts in $t$ steps$)$.)}

    \framebox{\parbox{0.99\textwidth}{

        1. ~ For each $n=0,1,2,\ldots,t$: 
        \comment{Try each possible witness size $n$.}
        
        1.1 ~~~ Define verifier $W_n(x,w)$ to do the following: 
        \comment{Modify $V$ for precondition in line 3.}
        
        \smallskip

        \centerline{\sf ``Accept iff $w\in[2^n]$ and $V(x,w)$ accepts within $t$ steps.''}
        \smallskip

        1.2 ~~~  Take $T$ polynomially larger than $t$, such that $W_n$ necessarily halts within $T$ steps.

        1.3 ~~~ Run check$_\U(n,2^n,(W_n,x,1^{T}))$.  \comment{check$_\U$ is defined below.}

        2. ~ Accept $(V,x,1^t)$ if any call to check$_\U$ returned True, else reject.
      }}

    \medskip

    {\bf check}$_\U(\mbox{integer } n, \mbox{integer } u,(V, x, 1^t))$:
    \comment{If precondition met, returns whether $\exists w.~V(x,w)$ accepts.}

    \framebox{\parbox{0.99\textwidth}{

        3. {\bf precondition:} {\em $V$ halts within $t$ steps on any input, 
      and $\{w ~|~V(x,w) \mbox{ accepts}\}$ is a subset of $[u]$.}

        4.  ~ {\bf if} $u$ is bounded by a polynomial in $n$: \comment{Base case.}

        4.1 ~~~ Return True if for some $w\in[u]$, $V(x,w)$ accepts.  Otherwise return False.

        5.  ~ {\bf else}:

        6.1 ~~~ Let $a \leftarrow A_\U((V,x,1^t),n_u = \lceil \log_2 u\rceil)$. 
        \comment{Compute approximate witness, if any witness exists.}

        \smallskip

        6.2 ~~~ Let $u' \leftarrow |[u]\cap \N(a)|$.
        \comment{Restrict to smaller universe $[u]\cap\N(a)$.}

        \smallskip

        6.3 ~~~ Define bijection $\phi$ to map the $i$th element
        of $[u']$ to the $i$th element of $\N(a)\cap[u]$.

        \smallskip

        6.4 ~~~ Define verifier $V_{a}(x,w')$ to do the following: 
        {\sf ``Accept iff $w'\in[u']$ and $V(x,\phi(w'))$ accepts.''}

        \smallskip
        
        6.5 ~~~ Take $t'$ polynomially larger than $t$ (details in proof) such that $V_{a}$ necessarily halts within $t'$ steps.

        \smallskip

        6.6 ~~~ Return check$_\U(n,u',(V_{a}, x, 1^{t'}))$.  \comment{Recurse on universe $[u']$.}
      }
    }
    \caption{Definition of algorithm alg$_\U$, which uses $A_\U$ to decide membership in $\U$.}
    \label{fig:alg}
  \end{figure}

  \noindent{\em Correctness.}
  First we show that, assuming the precondition in line 3 is met,
  check$_\U(n,u,(V,x,1^t))$ is correct.
  That is, check$_\U$ returns True
  iff there is a string $w$ such that $V(x,w)$ accepts.

  We call such a string $w$ a {\em witness for $V$}.
  In contrast, by a {\em witness for $V_\U$}, we mean a string $w$
  such that $V_\U((V,x,1^t), w)$ accepts,
  which by definition is a $w$ such that $V(x,w)$ accepts {\em within $t$ steps}.
  But note that, by the precondition on $t$ in line 3,
  $w$ is a witness for $V$ if and only if $w$ is a witness for $V_\U$.

  By inspection, if check$_\U$ returns on line 4.1, it is correct.
  Otherwise, lines 6.1-6.6 are executed, so

  \smallskip

  \noindent
  \begin{tabular}{l@{ ~ }l@{ ~ }l@{\hspace{2em}}ll}
    \multicolumn{3}{l}{$V$ has a witness}
    \\
    &$\Leftrightarrow$&  $V$ has an $n_u$-bit witness
    &\em\small
    &\em\small By precondition on $[u]$.
    \\
    &$\Leftrightarrow$&  $V_\U$ has an $n_u$-bit witness
    &\em\small
    &\em\small By precondition on $t$.
    \\
    &$\Leftrightarrow$& $V_\U$ has a witness $w$ in $\N(a)$.
    &\em\small(Line 6.1)
    &\em\small By def.\ of $A_\U$.
    \\
    &$\Leftrightarrow$& $V$ has a witness $w$ in $\N(a)$.
    &\em\small 
    &\em\small By precondition on $t$. 
    \\
    &$\Leftrightarrow$&  $V$ has a witness $w$ in $[u]\cap\N(a)$
    &\em\small
    &\em\small By precondition on $[u]$.
    \\
    &$\Leftrightarrow$&  $V_{a}$ has a witness $w'$
    &\em\small(Line 6.4)
    &\em\small By def.\ of $V_{a}$.  Note $w'=\phi^{-1}(w)$.
    \\
    &$\Leftrightarrow$& check$_\U(n,u',(V_{a},x,1^{t'}))$ accepts
    &\em\small(Line 6.6)
    &\em\small By induction (as precondition is met by $(V_{a},x,1^{t'})$).
    \\
    &$\Leftrightarrow$& check$_\U(n,u,(V,x,1^t))$ accepts.
    &\em\small(Line 6.6)
    &\em\small By inspection.
  \end{tabular}
  \medskip

  \noindent
  Thus, check$_\U$ is correct.
  Correctness of the algorithm alg$_\U$ follows
  just by inspecting alg$_\U$.
  Namely
  there is a witness such that $V(x,w)$ accepts within $t$ steps
  iff there is such a witness with $n$ bits for some $n\le t$.
  Further, $V(x,w)$ accepts some $n$-bit witness $w$ within $t$ steps 
  iff $W_n(x,w)$ accepts some $w$.
  Finally, the input $(n,2^n,(W_n,x,1^T))$ meets the precondition for the call to check$_\U$,
  so that call correctly determines whether there is a $w$ such that $W_n(x,w)$ accepts.

  \medskip
  \noindent{\em Running time.}
  To disambiguate, let $(V_0, x, 1^{t_0})$ denote the original input to alg$_\U$.
  Below, by ``polynomial'', we mean polynomial in the length of this input.

  We first show that, for any $n$, when alg$_\U$ calls
  check$_\U(n,2^n,(W_n,x,1^T))$, it results in only polynomially many recursive calls.
  Consider an execution of lines 6.1-6.6 of check$_\U$.
  Let $d=n/2+H(n_u,\alpha)$.  By Lemma~\ref{lemma:rand},
  the number of strings in $[2^{n_u-1}]$ 
  that have distance more than $d$ from the string $0^{n_u-1}$
  is $\Omega(2^{n_u} /P(n_u,\alpha))$.
  By symmetry, the number of strings in $[2^{n_u-1}]$ 
  have distance more than $d$ from the last $n_u-1$ bits of $a$
  is $\Omega(2^{n_u} /P(n_u,\alpha))$
  (using here that $[2^{n_u-1}]$ contains exactly all $(n_u-1)$-bit strings).
  By definition of $n_u$, we have $u > 2^{n_u-1}$, so, for each such
  string $z$, the $n_u$-bit string $0z$ is in $[u]$ and has distance at least $d$ from $a$.
  Thus, $\Omega(2^{n_u}/P(n_u,\alpha))$ strings in $[u]$
  have distance more than $d$ from $a$,
  so $u - u' = \Omega(2^{n_u}/ P(n_u,\alpha))$.
  This implies that check$_\U$ recurses $O(P(n_u,\alpha))) = O(P(n,\alpha))$ 
  times before $n_u$ decreases by 1.
  Since $1\le n\le n_u$ throughout, this in turn implies that check$_\U$
  recurses $O(n\,P(n,\alpha))$ times total.  

  Next consider how the encoding size of the arguments $(n,u,(V,x,1^t))$ grows as the computation progresses.
  By inspection of alg$_\U$ each triple $(W_n,x,1^T)$ has polynomial encoding size.
  With each recursive call made by check$_\U$,
  the size of the encoding of the arguments remains polynomial, 
  because $u$ and $n_u$ do not increase and, in line 6.4, the encoding of the verifier $V_{a}$ 
  is only an additive polynomial larger than the encoding of $V$
  (the verifier $V_{a}$ has encoded in it $V$, along with $u'$ and $a$, 
  which each have $n_u\le n\le t_0$ bits).
  We verify below that, when the verifier $V_{a}$ runs,
  the check ``$w'\in[u']$'' and the computation of $\phi(w')$ from $w'$ can be done in polynomial time,
  so that $t'$ is only an additive polynomial larger than $t$.
  Thus, with each recursive call, the encoding size of the arguments
  grows by an additive polynomial.  Since there are polynomially many calls,
  the encoding size remains polynomial.

  Next we consider the running time of each call check$_\U(n,u,(V,x,1^t))$.
  First consider the case that lines 6.1-6.6 execute.
  Line 6.1 executes in polynomial time by the assumption on $A_\U$
  (using that the encodings of $V$ and $t$ are polynomial, as shown above).
  Implement line 6.2 in polynomial time as follows.
  Let $N(s)$ denote the number of $n_u$-bit binary strings having string $s$
  as a prefix and having Hamming distance at most $d=\lfloor n_u/2+H(n_u,\alpha)\rfloor$
  to $a$.  The first $|s|$ bits of such a string agree with $s$; 
  the remaining $n_u-|s|$ bits differ in up to $d-\ell$ places from $a$,
  where $\ell$ is the Hamming distance between $s$ and the first $|s|$ bits of $a$.
  Thus, $N(s)=\sum_{j=0}^{d-\ell}{n_u-|s|\choose j}$.
  Thus, given $s$, $N(s)$ can be computed in polynomial time.
   Now, compute $u'$ in line 6.2 in polynomial time via the identity
  \[u' =\sum_{\ell : b_{\ell}=1} N(b_1 b_2\cdots b_{\ell-1}0),\]
  where $b$ is the binary representation of $u$ 
  (with $b_1=1$ being the most significant bit).
  (Each string $w$ in $\N(a)\cap [u]$ is lexicographically less than $b$,
  so is counted by the term for $\ell$ in the sum where $\ell = \min\{i~|~b_i = 1, w_i = 0\}$.)
  Each term is computable in polynomial time as described in the previous paragraph, 
  so the sum is computable in polynomial time.

  Likewise, in line 6.4,
  the verifier $V_{a}$ can compute $w = \phi(w')$ given any $w'\in[u']$ in polynomial time as follows.
  Abusing notation for just this paragraph, for any binary string $z$,
  let $[z]$ denote the set of $|z|$-bit binary strings 
  that are lexicographically no larger than $z$.
  Compute $i=|[w']|$, the (lexicographic) rank of $w'$ in $[u']$
  (to do so, use the fact that $w'$ is the binary representation of $i-1$).
  Then the $w$ we want (by definition of $\phi$) is the one with rank $i$ within $\N(a)\cap[u]$.
  In other words, $w$ is the string such that $|\N(a)\cap[w]| = i$.
  For any string $z$, using the previous paragraph, we can compute $|\N(a)\cap[z]|$ 
  in polynomial time, so we can use binary search over $z\in [u]$
  (checking $|\N(a)\cap[z]| \le i$) to find $w$ in polynomial time.
  
  Finally, when the base case is reached (line 4.1 is executed)
  the verifier $V$ is actually run on (polynomially many) inputs $(x,w)$.
  Since $V$ halts within $t$ steps, and (as argued previously) $t$ is polynomial,
  this step takes polynomial time.  This completes the proof for part (i).  

  \smallskip

  \noindent {\em Proof of part (ii).}
  Assume that $A_\U$ exists as in part (ii) of the theorem statement.
  Consider using that $A_\U$ directly in the algorithm alg$_\U$ described in the proof for part (i).
  First consider the case that the resulting algorithm is given a positive instance of $\U$,
  that is, an instance that has some witness $w$.
  Let $n$ be the length of the witness,
  and consider the corresponding call to check$_\U$ by alg$_\U$.
  As observed in analyzing the number of iterations,
  for each $n_u\in\{\lfloor c\log n\rfloor, \ldots, n\}$,
  this call results in $O(P(n_u,\alpha))$ recursive calls to check$_\U$ with that $n_u$.
  Each recursive call calls $A_\U$ once.
  Since $A_\U$ is randomized and (by assumption)
  has probability $O(1/n_u P(n_u,\alpha))$ of failure on any input $((V,x,1^t),n_u)$,
  the probability that none of these calls to $A_\U$ fails is at least
  \[
  \prod_{n'=c\log n}^n \Big(1-O\Big(\frac{1}{n' P(n',\alpha)}\Big)\Big)^{O(1/P(n',\alpha))}
  \,\le\,
  \exp\Big(-O\Big(\sum_{n'=1}^n \frac{1}{n'}\Big)\Big)
  \,=\,
  \exp(-O(\log n))
  \,=\,
  \frac{1}{n^{O(1)}}.
  \]
  Thus, the algorithm has probability $1/n^{O(1)}$ of accepting the input.
  Since the algorithm never has false positives,
  this shows that $\U\in\,$\cclass{RP}{}.  Since $\U$ is \cclass{NP}{}-complete,
  the result follows.
\end{proof}


\section{Many natural verifiers can be approximated within $n/2$}
\label{section:algorithms}
By Corollary~\ref{corollary:main} to Thm.~\ref{theorem:main},
every (paddable) \cclass{NP}{}-complete problem has {\em some} verifier that is hard to 
approximate within $n/2+O(\sqrt{n\log n})$.
These hard verifiers are not natural verifiers for the problems in question.
Next we observe that that natural verifiers can be easier to approximate:
for many \cclass{NP}{}-complete problems, 
there are $n/2$-approximation algorithms for a natural verifier.

As examples of natural verifiers, the natural verifier for 3-SAT 
uses witnesses that are $n$-bit strings, where the $i$th bit is 1
if the $i$th variable is assigned True.
For Vertex Cover (and other optimization problems over subsets), 
the natural verifier uses witnesses that are $n$-bit strings,
where the $i$th bit is 1 if the $i$th vertex is in the cover.

For a few problems, $n/2$-approximation algorithms follow trivially by symmetry.
For example, for any instance of not-all-equal 3-SAT (NAE-3SAT), for any satisfying 
assignment, complementing all the variables also gives a satisfying assignment.
Thus, the all-False assignment gives an $n/2$-approximation.

Next focus on decision versions of optimization problems.
For intuition, consider the decision version of unweighted Vertex Cover:
{\em Given a graph $G=(V,E)$ and an integer $k\le |V|$,
does there exist a vertex cover (a subset of vertices incident to every edge) of size at most $k$?}
Let $n=|V|$ be the number of vertices.

The natural verifier $V_{VC}$, on input $((G,k), C)$, where $C\subseteq V$,
accepts iff $|C| \le k$ and $C$ is a vertex cover.
Here is a trivial $n/2$-approximation algorithm $A_{VC}$:
{\em On input $((G=(V,E),k), n)$, if $k\le n/2$, output the empty set, otherwise output $V$.}
We assume here the natural encoding of the set $C$ as an $n$-bit string,
whose $i$th bit determines whether $C$ contains the $i$th vertex in $V$.

To see that $A_{VC}$ is an $n/2$-approximation algorithm,
suppose that $G$ does have a vertex cover $C$ of size at most $k$.
Then there also exists a vertex cover $C'$ of size exactly $k$.
If $k\le n/2$, then $|C'|\le n/2$, so $C'$ has Hamming distance
at most $n/2$ from the empty set.
Otherwise $k>n/2$ and $|C'|>n/2$, so $C'$ has Hamming distance
at most $n/2$ from $V$.

The same idea gives $n/2$-approximation algorithms for the unweighted
versions of similarly structured \cclass{NP}{}-complete problems such as Set Cover,
Independent Set, and Clique.  (The latter two are maximization problems,
but the idea is the same.)

\begin{observation}\it
\label{observation:vc}
Consider any \cclass{NP}{} verifier $V$ that, on input $((I,k),w)$,
accepts only if $w$ is the natural encoding of 
a subset $S$ of some $n$-element universe $U(I)$
such that $|S|\le k$.  

Consider any \cclass{NP}{} verifier $V$ that, on input $((I,k),w)$,
accepts only if $w$ is the natural encoding of 
a subset $S$ of some $n$-element universe $U(I)$
such that $|S|\ge k$.  
 
Any such verifier has an $(n/2)$-Hamming-approximation algorithm.
\end{observation}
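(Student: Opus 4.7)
The plan is to generalize the explicit Vertex Cover argument given just before the observation. The approximation algorithm $A_V$ performs a single numeric comparison and returns one of two hardcoded strings: on input $((I,k), n)$, return $0^n$ (the natural encoding of $\emptyset$) when $k \le n/2$, and $1^n$ (the natural encoding of the entire universe $U(I)$) otherwise. This clearly runs in polynomial time.

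For the correctness argument, I would suppose that some accepting witness exists, encoding a subset $S \subseteq U(I)$ with $|S| \le k$ in the first case, or $|S| \ge k$ in the second. The key ingredient is the monotonicity of the natural predicate beyond the cardinality constraint: for the $|S|\le k$ case the predicate is closed under taking supersets (any superset of a set cover is a set cover, any superset of a vertex cover is a vertex cover), and for the $|S| \ge k$ case the predicate is closed under taking subsets (any subset of an independent set is independent, any subset of a clique is a clique). Using this closure, any accepting witness can be extended (respectively shrunk) to produce an accepting witness $w^*$ whose encoded subset has size exactly $\min(k,n)$.

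I would then finish with the Hamming calculation: the output string has Hamming distance exactly $|S^*|$ from $w^*$ when we output $0^n$, and exactly $n - |S^*|$ when we output $1^n$. Substituting $|S^*| = k$ and the corresponding case for $k$ gives distance $k \le n/2$ in the first sub-case and distance $n - k < n/2$ in the second, so the $n/2$ guarantee holds either way.

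The only subtle point is that the observation's statement does not make the monotonicity of the auxiliary predicate explicit; a pathological verifier that accepts only a single middle-sized subset would defeat both hardcoded outputs. I would handle this by noting that the observation is about natural verifiers for the named problems (Set Cover, Vertex Cover, Independent Set, Clique), where monotonicity under set inclusion is automatic from the combinatorial definition. I do not anticipate any real obstacle beyond making this monotonicity hypothesis explicit in the write-up.
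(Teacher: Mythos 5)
Your algorithm and Hamming-distance calculation are exactly the ones the paper uses in the concrete Vertex Cover discussion preceding the observation (output $0^n$ when $k\le n/2$, otherwise $1^n$, then compare to a witness of size $\min(k,n)$), so the approach is the same. You are also right that the observation as literally stated is missing a hypothesis: the condition ``accepts only if $|S|\le k$'' is one-sided, and when $k>n/2$ nothing guarantees an accepting witness of size near $k$ exists. A concrete check: take $k\ge n$, making the cardinality constraint vacuous; then the observation would apply to the natural $3$-SAT verifier (witness = set of true variables) and would resolve the paper's own open problem stated in Section~\ref{section:discussion}. The paper's Vertex Cover argument does tacitly invoke exactly the monotonicity you name (``Then there also exists a vertex cover $C'$ of size exactly $k$'' is the closure-under-supersets step), so your explicit saturation/monotonicity hypothesis is the correct repair, and the paper's parenthetical remark that ``only if'' is not ``if and only if'' does not by itself address this. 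In short: your proof is correct, matches the intended argument, and accurately diagnoses an imprecision in the statement rather than a flaw in your own reasoning.
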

(Note that ``only if'' in the observation is not ``if and only if''.)
\smallskip

Observation~\ref{observation:vc} applies to many problems 
(unweighted Vertex Cover, Independent Set, Clique, Set Cover, Feedback Vertex Set, 
and, e.g., Hamiltonian Path if the witness is encoded as an edge set),
but it is somewhat unsatisfying in that it is an artifact of the fact that the problems are
 presented as {\em decision} problems, and not in their more natural {\em optimization} form.
For example, the optimization form of unweighted Vertex Cover is,
{\em given a graph $G$, compute a Vertex Cover $C$ of minimum size}.
In this context, by a {\em $d(n)$-approximation algorithm},
we mean a polynomial-time algorithm that, given just $G$ and $n$
(but not the size of the desired cover!)
outputs a subset $S$ of the $n$ vertices, where $S$
has Hamming distance at most $d(n)$ from some
{\em minimum-size} vertex cover.

\begin{theorem}\label{theorem:vc}
  There are $n/2$-Hamming-approximation algorithms for the optimization versions of
  unweighted Vertex Cover, Independent Set, and Clique.
\end{theorem}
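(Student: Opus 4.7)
I would handle Vertex Cover first and then derive Independent Set and Clique by complementation. The key tool will be the Nemhauser--Trotter partition: in polynomial time one computes a tripartition $V = V_0 \cup V_{1/2} \cup V_1$ of the vertex set---corresponding to a half-integral optimum of the natural LP relaxation of Vertex Cover---with two crucial properties: (a) some minimum vertex cover $C^*$ satisfies $V_1 \subseteq C^* \subseteq V_1 \cup V_{1/2}$, and (b) the LP value $|V_1| + |V_{1/2}|/2$ is a lower bound on $|C^*|$. I would compute this partition combinatorially via a reduction to maximum bipartite matching (K\"onig's theorem applied to the bipartite double cover of $G$), so nothing here requires solving an LP explicitly.

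For Vertex Cover, the algorithm will output $S = V_1 \cup V_{1/2}$. To bound the Hamming distance, write the promised minimum cover as $C^* = V_1 \cup T$ with $T \subseteq V_{1/2}$; then $S \triangle C^*$ is exactly $V_{1/2} \setminus T$, of size $|V_{1/2}| - |T|$. The LP lower bound (b) gives $|V_1| + |V_{1/2}|/2 \le |C^*| = |V_1| + |T|$, so $|T| \ge |V_{1/2}|/2$, which yields $|S \triangle C^*| \le |V_{1/2}|/2 \le n/2$, as required.

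For Independent Set, the algorithm will output the complementary set $V \setminus S = V_0$. Since every maximum independent set has the form $V \setminus C^*$ for some minimum vertex cover $C^*$, and complementation preserves symmetric difference (so that $(V \setminus A) \triangle (V \setminus B) = A \triangle B$), the same Hamming-distance bound transfers directly. For Clique, I would simply apply the Independent Set algorithm to the complement graph $\bar G$. The one real conceptual point---and the main obstacle to a more elementary argument---is recognizing that the key inequality $|T| \ge |V_{1/2}|/2$ follows from the LP lower bound alone, so the algorithm never needs to know $T$ or $|C^*|$. This is precisely what makes the optimization version (where the optimum size $k^*$ is not given as part of the input, as it was for Observation~\ref{observation:vc}) tractable. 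Everything else reduces to a routine check that the Nemhauser--Trotter partition and the output set can be produced in polynomial time.
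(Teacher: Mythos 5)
Your proposal is correct and takes essentially the same approach as the paper: both use the Nemhauser--Trotter half-integral LP decomposition, output $S = V_1 \cup V_{1/2}$, invoke the LP lower bound to show the symmetric difference with the guaranteed optimal cover $C^* \subseteq S$ is at most $n/2$, and then derive Independent Set and Clique by complementation. Your bookkeeping is slightly tighter (bounding the distance by $|V_{1/2}|/2$ rather than $|S|/2$) and you note a combinatorial route to the LP solution via bipartite matching, but neither changes the argument in substance.
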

\begin{proof}

The algorithms for Independent Set and Clique work by standard reductions to Vertex Cover.
The algorithm for Vertex Cover is based on a classic result of Nemhauser and Trotter.
\begin{proposition}[\cite{nemhauser1975vertex,khuller2002algorithms}]\label{prop:nt}
Fix any instance $I$ of Vertex Cover.
Let $y$ be any (minimum cost) basic feasible solution to the linear program
relaxation of the standard integer linear program for $I$.
Then, for each vertex $v$, the variable $y_v$ has value in $\{0,1/2,1\}$,
and there exists an optimal vertex cover $C^*$
that has the following property.  For each vertex $v$,
if $y_v=0$, then $v\not\in C^*$,
while if $y_v = 1$, then $v\in C^*$.
\end{proposition}

The $n/2$-approximation algorithm for the optimization form of unweighted
Vertex Cover is as follows:
{\em Given the instance $G$,
compute the minimum-cost basic feasible solution $y$,
then output the set of vertices $S=\{v \in V ~|~ y_v > 0\}$.}

Since the basic feasible solution $y$ can be computed in polynomial time,
the algorithm clearly runs in polynomial time.
Next we prove the approximation guarantee.
Let $C^*$ be the optimal vertex cover from Prop.~\ref{prop:nt}.
Since there is a feasible solution of cost $|C^*|$ to the linear program,
while $y$ is a minimum-cost solution, it follows that
$|C^*| \ge \sum_i y_i$.
The choice of $S$ implies $\sum_i y_i \ge |S|/2$.
Thus, $|C^*| \ge |S|/2$.

The choice of $S$ also implies that $C^* \subseteq S$.
This and $|C^*| \ge |S|/2$ imply that $C$ 
is within Hamming distance $|S|/2$ (which is at most $n/2$)
from $C^*$.

This proves the approximation guarantee for Vertex Cover.
The other problems follow, because the standard reductions
between these problems preserve Hamming approximation.
Here are the details:

The $n/2$-approximation algorithm for Independent Set is as follows:
{\em Given the instance $G$, run the $n/2$-approximation algorithm
  for Vertex Cover.  Let $S$ be the output.  Return the complement of $S$, 
  i.e., $\overline{S} = V-S$.}
The output is an $n/2$-approximation because a vertex set
$I^*\subseteq V$ is a maximum independent set if and only 
if its complement $\overline I^* = V-I^*$ is a minimum vertex cover.


The $n/2$-approximation algorithm for Clique is as follows:
{\em Given the instance $G=(V,E)$, 
  compute the complement graph $G'=(V,\overline E)$
  with the same vertex set but whose edge set is
  the complement of $E$.
  Run the $n/2$-approximation algorithm for Independent Set on $G'$.
  Let $S$ be the output.  Return $S$.}
The output is an $n/2$-approximation because a vertex set is a maximum clique in $G$ 
if and only if it is a maximum independent set in $\overline G$.
\end{proof}


\section{Many natural verifiers are hard to approximate within $n/2-n^\epsilon$.}

This section describes how to strengthen many of the hardness results of
Feige et al., to show that, for many of the natural verifiers that they consider,
(for any $\epsilon>0$, assuming \cclass{P}{}$\ne$\cclass{NP}{})
there are no $(n/2-n^\epsilon)$-approximation algorithms.
The proofs here are elementary padding arguments,
similar in spirit to the those of Feige et al.

The proofs use the following standard observation:
\begin{observation}\label{observation:single}\it
  {\em (i)} If the following problem has a polynomial-time algorithm, then \cclass{P}{}=\cclass{NP}{}:
  {\sf Given a 3-SAT formula, find a {\em feasible value} for the first
    variable in the formula, if one exists.}
  \smallskip

  \noindent
  {\em (ii)} If there is a randomized polynomial-time algorithm that 
  solves any $n$-variable instance of the above problem 
  with probability $1/2+1/n^c$ for any fixed $c>0$, then \cclass{RP}{}=\cclass{NP}{}.
\end{observation}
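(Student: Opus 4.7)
My plan for part (i) is the standard downward self-reduction. Given a 3-SAT formula $\phi$ on variables $x_1,\ldots,x_n$, I would invoke the hypothesized algorithm on $\phi$ to obtain a bit $b_1$, substitute $x_1=b_1$ and simplify to get $\phi_1=\phi|_{x_1=b_1}$, then recurse on $\phi_1$ to obtain $b_2$, and so on, halting after $n$ iterations with a full assignment $b=(b_1,\ldots,b_n)$. If $\phi$ is satisfiable, then by induction each intermediate $\phi_i$ is satisfiable: the algorithm by hypothesis returns a \emph{feasible} first-variable value, so fixing $x_{i+1}=b_{i+1}$ preserves satisfiability. Thus $b$ satisfies $\phi$. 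After the loop, simply evaluate $\phi(b)$ and accept iff it is true. There are no false positives (unsatisfiable formulas have no satisfying assignment to verify), so this decides 3-SAT in polynomial time; since 3-SAT is \cclass{NP}-complete, \cclass{P}=\cclass{NP}.

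For part (ii), the plan is to boost the one-shot success probability to $1-1/n^2$ by majority voting, then plug the boosted subroutine into the self-reduction of part (i). The key point is that the algorithm's output is a single bit, so a majority vote over independent trials is well-defined and unambiguous. By a Chernoff bound, running the given randomized algorithm $O(n^{2c+1})$ times on the same instance and returning the majority yields a feasible value with probability at least $1-1/n^2$ whenever a feasible value exists. Then I apply the self-reduction: at each of the $n$ iterations, invoke the boosted subroutine on the current subformula (which has at most $n$ variables, so the boosted error bound still applies); after all iterations, verify the resulting assignment against $\phi$ and accept only if it satisfies. A union bound over the $n$ iterations gives overall failure probability at most $1/n$ on satisfiable inputs, and unsatisfiable inputs are always rejected by the final verification. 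Thus 3-SAT is in \cclass{RP}, giving \cclass{RP}=\cclass{NP}.

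The only mild subtlety, and what I would call out explicitly but not belabor, is the behavior of majority voting when both $x_1=0$ and $x_1=1$ are feasible for the current subformula: in that case the guarantee ``$\ge 1/2+1/n^c$ probability of a feasible answer'' is vacuous (both answers are feasible), but this only helps us, since either majority outcome is an acceptable input to the next iteration. When the feasible value is unique, the $\ge 1/2+1/n^c$ bias is toward the correct bit, and amplification gives a correct majority with probability $\ge 1-1/n^2$ as needed. There is no real obstacle here; the entire argument is routine self-reduction plus standard Chernoff amplification on a Boolean-valued subroutine.
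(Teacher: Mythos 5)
Your proof is correct and takes essentially the same route as the paper: part (i) is the standard downward self-reduction (which the paper leaves implicit as ``standard arguments''), and part (ii) boosts the per-call success probability by majority voting over polynomially many independent trials and then plugs the boosted subroutine into the same self-reduction, exactly as the paper does. Your extra remark about the case where both truth values are feasible (so the bias guarantee is vacuous but the majority is still acceptable) is a correct refinement the paper doesn't spell out, and your choice of $O(n^{2c+1})$ trials is safely within what Chernoff requires.
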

If the formula is satisfiable, a {\em feasible} value for the variable 
is one that is consistent with some satisfying assignment.
(If the formula is not satisfiable, any value can be found.)

Part (i) holds by standard arguments.
To see why part (ii) is true, note that the randomized algorithm
could be used to find a satisfying assignment of a given formula
with high probability: to determine the likely value of the first variable,
run the randomized algorithm, say, $n^{c+2}$ times, then take the majority
value --- this standard amplification trick boosts the probability 
of finding a feasible value to at least $1-1/n^2$.  
Then substitute the likely feasible value for the variable, simplify the formula,
then recurse.  This would find a full satisfying assignment with probability
at least $1-O(1/n)$ in polynomial time, showing that \cclass{RP}{}=\cclass{NP}{}.

\smallskip
We start by showing hardness of approximating the natural 3-SAT verifier:
\begin{theorem}\label{theorem:SAT}
  Fix any constants $\epsilon,c>0$.

  \noindent
  {\em (i)}
  If the natural verifier for 3-SAT
  has an $(n/2-n^\epsilon)$-Hamming-approximation algorithm $A$,
  then \cclass{P}{}=\cclass{NP}{}.
  
  \smallskip
  
  \noindent
  {\em (ii)} If that verifier
  has a randomized $(n/2-n^\epsilon)$-Hamming-approximation algorithm $A$ working with
  probability $1/2+1/n^{c}$, then \cclass{RP}{}=\cclass{NP}{}.
\end{theorem}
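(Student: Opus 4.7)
The plan is to reduce, via padding, the single-variable-value problem of Observation~\ref{observation:single} to the approximation task. Given a 3-SAT formula $\psi$ on $k$ variables $x_1, \dots, x_k$, build a padded formula
\[
  \phi \;=\; \psi \;\wedge\; \bigwedge_{i=1}^{m}\bigl(y_i \leftrightarrow x_1\bigr)
\]
on $n = k + m$ variables, where $m = k^{\lceil 3/\epsilon\rceil}$ and each biconditional is written as two 3-clauses (padding a literal to fill the third slot). Then $\phi$ is satisfiable iff $\psi$ is, and every satisfying assignment of $\phi$ has the form $(w^\psi, b, b, \dots, b)$, where $w^\psi$ satisfies $\psi$ and $b = w^\psi_1$.

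Run the hypothesized approximation algorithm $A$ on $\phi$ to obtain an $n$-bit string $a = (a^\psi, a^y)$, and return the majority value of $a^y$ as the candidate value of $x_1$. If $\psi$ is satisfiable, then $a$ is within Hamming distance $n/2 - n^\epsilon$ of some $w = (w^\psi, b\,\mathbf{1})$, so the number of bits of $a^y$ that disagree with $b$ is at most $n/2 - n^\epsilon$. For our choice of $m$, we have $n^\epsilon \ge m^\epsilon \ge k$, whence $n/2 - n^\epsilon \le m/2 - 1 < m/2$, so a strict majority of $a^y$ equals $b$. Thus the returned bit is a feasible value for $x_1$, and Observation~\ref{observation:single}(i) yields \cclass{P}{}=\cclass{NP}{}. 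For part~(ii), the reduction invokes $A$ once and processes its output deterministically, so it succeeds with probability at least $1/2 + 1/n^c = 1/2 + 1/k^{c'}$ for a constant $c'$ depending only on $c$ and $\epsilon$; Observation~\ref{observation:single}(ii) then gives \cclass{RP}{}=\cclass{NP}{}.

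The only delicate point is the choice of $m$: it must be large enough that the $n^\epsilon$ error budget of $A$ cannot overwhelm the padding (to guarantee the majority vote), yet polynomial in $k$ so the reduction runs in polynomial time. The exponent $\lceil 3/\epsilon\rceil$ is more than sufficient for both, and all remaining steps (expressing the biconditionals in 3-CNF, verifying the Hamming bookkeeping, and translating the randomized success probability) are routine.
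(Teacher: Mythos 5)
Your proof is correct and takes essentially the same approach as the paper's: pad $\psi$ with polynomially many forced copies of the target variable, run $A$ on the padded formula, recover a feasible value by majority vote over the copies, and invoke Observation~\ref{observation:single}. The only cosmetic differences are the padding exponent ($\lceil 3/\epsilon\rceil$ versus the paper's $1/\epsilon$) and that you bound disagreements where the paper bounds agreements; both are equivalent and both yield the same polynomial-time reduction for parts (i) and (ii).
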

\begin{proof}
  {\em (i)} The proof is an elementary amplification argument.  

  Assume without loss of generality that $1/\epsilon$ is an integer 
  (otherwise replace $\epsilon$ by $1/\lceil 1/\epsilon\rceil$).

  Assume the algorithm $A$ in the statement of Theorem~\ref{theorem:SAT} exists.
  Given any 3-SAT formula $\psi$, we compute, in polynomial time,
  a feasible value for the first variable $z$
  as follows.

  To $\psi$, add $N=n^{1/\epsilon}$ copies of $z$.
  Specifically, add new clauses
  $(z^1=z) \wedge (z^2=z) \wedge \cdots \wedge (z^{N}=z)$
  (where $a=b$ is shorthand for $(a \vee \overline{b}) \wedge (\overline{a} \vee  b)$,
  and $z^1, z^2,\ldots, z^{N}$ are new variables).
  This gives a formula $\psi'$ with $n' = n + n^{1/\epsilon}$ variables,
  essentially preserving any satisfying assignments,
  but forcing the $n^{1/\epsilon}$ added variables to take the same value as $z$
  in any satisfying assignment.

  Run the approximation algorithm $A$ on $\psi'$, and let $a'$ be the returned value.
  If $\psi'$ is satisfiable, then $a'$ achieves Hamming distance
  at most $n'/2 - (n')^{\epsilon}$ to an assignment $w'$ satisfying $\psi'$.
  That is, $a'$ agrees with $w'$ on at least
  $n'/2 + (n')^{\epsilon} > n^{1/\epsilon}/2 + n$ variables.   
  To do so, even if $x$ agrees with $w'$ on {\em all} $n$ of the original variables,
  it would still have to agree with $w'$ on more than {\em half} ($n^{1/\epsilon}/2)$ of the duplicates 
  of $z$.
  Thus, the majority value of the duplicate variables in $a'$
  (true or false, whichever $a'$ assigns to more duplicates)
  must be the value that $w'$ assigns to $z$.
  This value must also be a feasible value for $z$ in $\psi$ (if one exists).

  Thus, if $A$ exists, then one can compute a feasible value for $z$ in $\psi$ (if one exists)
  in polynomial time.
  By Observation~\ref{observation:single}, then, \cclass{P}{}=\cclass{NP}{}.

  \medskip

  \noindent
  {\em (ii)}
  Assume the algorithm $A$ exists, and use it as in the proof for part (i).
  Given the formula $\psi$ with $n$ variables, 
  call $A$ on the formula $\psi'$ with $n'$ variables,
  where $n'=n+n^{1/\epsilon}$.
  By the properties of $A$ assumed in the theorem,
  the probability that the call succeeds in finding a feasible value 
  for $z$ (if one exists) is $1/2+1/(n')^{c} \ge 1/2+1/n^{O(c/\epsilon)}$.
  Thus, by Observation~\ref{observation:single} (b), \cclass{RP}{} would equal \cclass{NP}{}.
\end{proof}

Next we sketch how the same idea applies to other problems.
\begin{theorem}\label{theorem:other}  Fix any constant $\epsilon>0$.

  \noindent
  {\em (i)} If the natural verifier for unweighted Vertex Cover, Independent Set, or Clique
  has
  an $(n/2-n^{\epsilon})$-Hamming-approximation algorithm, then \cclass{P}{}=\cclass{NP}{}.

  \medskip

  \noindent
  {\em (ii)} If any of these verifiers
  has
  a randomized $(n/2-n^{\epsilon})$-Hamming-approximation algorithm
  working with probability $1/2+1/n^c$, then \cclass{RP}{}=\cclass{NP}{}.
\end{theorem}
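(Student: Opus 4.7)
The plan is to mimic the padding argument of Theorem~\ref{theorem:SAT}, reducing to Observation~\ref{observation:single}. For each of the three verifiers, I would assume a hypothetical $(n/2-n^\epsilon)$-Hamming-approximation algorithm $A$ and use it to build a polynomial-time algorithm that, given any 3-SAT formula $\psi$, finds a feasible value for the first variable of $\psi$. By Observation~\ref{observation:single}, this would show \cclass{P}{}=\cclass{NP}{} (or \cclass{RP}{}=\cclass{NP}{} for part~(ii)).

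For Vertex Cover, given $\psi$ on $n$ variables and $m$ clauses with first variable $z$, I would first apply the textbook variable-gadget reduction to produce a vertex-cover instance $G_\psi$ on $n_0 = 2n + 3m$ vertices: each variable $x$ contributes an edge $(u_x, v_x)$, each clause contributes a triangle, and literal vertices are linked to the matching variable-pair vertex. This reduction has the property that in any minimum vertex cover exactly one of $\{u_x, v_x\}$ is selected, indicating $x$'s truth value in the corresponding satisfying assignment. Next I would pad $G_\psi$ by adjoining $N = n_0^{1/\epsilon}$ ``cross-linked copies'' of the $z$-gadget: for each $i \in [N]$, add fresh vertices $u^i, v^i$ and the three edges $(u^i, v^i)$, $(u^i, v_z)$, $(v^i, u_z)$. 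A short case analysis on which of $u_z, v_z$ lies in a minimum cover $C^*$ then shows that in the padded graph $G'$ each new pair $(u^i, v^i)$ is forced to mimic the pattern of $(u_z, v_z)$ (that is, $u^i \in C^*$ iff $u_z \in C^*$).

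The algorithm then feeds $G'$ to $A$ with threshold $k = n + 2m + N$ (the exact size of a minimum cover when $\psi$ is satisfiable), receives a guess $a$ on $n' = n_0 + 2N$ bits, and for each $i$ casts a vote based on $(a(u^i), a(v^i))$: $(1,0)$ votes ``True'', $(0,1)$ votes ``False'', and $(0,0)$ or $(1,1)$ counts as ambiguous. The output is the majority of the non-ambiguous votes. For correctness, I would check that, writing $e_1, e_2$ for the number of copies with one resp.\ two coordinate-mismatches against $C^*$, the gap between correctly- and incorrectly-oriented votes is $N - e_1 - 2e_2 = N - e$, where $e \le n'/2 - (n')^\epsilon$ by the approximation guarantee. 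The choice $N = n_0^{1/\epsilon}$ gives $(n')^\epsilon \ge (2N)^\epsilon \ge n_0$, so $e \le N - n_0/2 < N$ and the majority vote is correct. For Independent Set and Clique the hardness is inherited from Vertex Cover via standard Hamming-preserving reductions: complementing an $n$-bit witness converts between a cover and an independent set, and passing to the complement graph converts independent sets to cliques; part~(ii) of the theorem goes through with the identical construction by invoking Observation~\ref{observation:single}(ii) in place of part~(i).

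The main obstacle I expect is making the case analysis for the cross-linked gadget truly airtight---ensuring that the pattern is forced in \emph{every} minimum vertex cover of $G'$, not just the canonical extensions of a minimum cover of $G_\psi$. In particular, one must rule out deviant schemes that place both $u^i$ and $v^i$ into the cover in exchange for dropping $u_z$ or $v_z$, or that let a minority of the copies adopt the opposite pattern in exchange for local savings elsewhere. A careful size comparison should handle these cases, because each such deviation costs at least one extra vertex per affected copy without any compensating saving. Once the gadget's forcing property is established, the approximation-error arithmetic mirrors that of Theorem~\ref{theorem:SAT} and should pose no further difficulty.
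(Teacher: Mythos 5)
Your proof is correct, and structurally it follows the same blueprint as the paper's (pad the graph to replicate the bit of interest, run the approximation algorithm, take a majority vote over the replicas, and conclude via Observation~\ref{observation:single}), but the gadget you use is different. The paper works directly with a generic instance $(G,v,k)$ where $k$ is the minimum vertex-cover size: it adds a duplicate $v'$ of $v$ and a path $P$ of even length $\approx n^{1/\epsilon}$ from $v$ to $v'$, so that every minimum cover of $G'$ restricts on $P$ to either the even-indexed or the odd-indexed vertices according to whether $v$ is in the cover, and the decision is made by comparing $C_A$'s Hamming distance to $P_0$ versus $P_1$; the 3-SAT connection is then invoked as ``standard arguments.'' You instead pass explicitly through the textbook 3-SAT$\to$VC reduction and attach $N$ independent three-edge ``cross-linked'' gadgets to the $z$-variable pair, reading off a per-copy vote. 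Both gadgets force the replicas to mirror $z$, and your explicit route through Observation~\ref{observation:single} has the minor advantage of not needing to be handed the minimum cover size $k$ as an assumed input.

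On the gap you flagged (ruling out deviant minimum covers of $G'$): this does close cleanly with exactly the ``careful size comparison'' you anticipate. The $N$ edges $(u^i,v^i)$ form a matching vertex-disjoint from $V(G_\psi)$, so every cover of $G'$ contains at least $N$ of the new vertices, and since none of the new vertices touches an edge of $G_\psi$, it must also contain at least $k_0$ original vertices. Hence $\mathrm{VC}(G') \ge k_0 + N$, and this is attained, so any minimum cover $C^*$ of $G'$ satisfies both inequalities with equality: $C^* \cap V(G_\psi)$ is a minimum cover of $G_\psi$ (so it contains exactly one of $u_z, v_z$), and exactly one of $u^i, v^i$ lies in $C^*$ for each $i$. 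If, say, $u_z \in C^*$ and $v_z \notin C^*$, the edge $(u^i, v_z)$ forces $u^i \in C^*$ and hence $v^i \notin C^*$, giving the forcing property uniformly. This rules out all the ``pay for both $u^i,v^i$'' or ``minority-opposite-pattern'' deviations, so your vote-counting arithmetic goes through as written.
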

\begin{proof}
  (i) We sketch the proof for Vertex Cover.
  The rest follow via standard reductions,
  from Vertex Cover to Independent Set
  and from Independent Set to Clique,
  as these reductions preserve Hamming approximation.

  Suppose such an algorithm $A$ exists.
  Given a graph $G=(V,E)$ with $n$ vertices,
  a non-isolated vertex $v\in V$,
  and the minimum size, $k$, of any vertex cover in $G$,
  we will use $A$ to determine in polynomial time
  either (a) that $G$ has a size-$k$ vertex cover containing $v$,
  or (b) that $G$ has a size-$k$ vertex cover not containing $v$.
  By standard arguments, if this can be done in polynomial time, then \cclass{P}{}=\cclass{NP}{}.

  Determine (a) or (b) as follows.
  Construct graph $G'$ from $G$ by adding 
  a copy $v'$ of $v$ (with edges to all neighbors of $v$),
  and a path $P$ (of new vertices)
  connecting $v$ to $v'$,
  so that $|P|$ (the number of edges in $P$)
  is even and roughly equals $n^{1/\epsilon}$.
  Let $n'=n+|P|$ be the number of vertices in $G'$,
  and let $k'=k + |P|/2$.
  Denote the successive vertices on path $P$ as $v = v_0, v_1, v_2, \ldots, v_{|P|}=v'$.
  Let $P_0$ contain the ``even'' vertices $v_2, \ldots, v_{|P|}=v'$ (not including $v$).
  Let $P_1$ contain the ``odd'' vertices $v_1, v_3, \ldots, v_{|P|-1}$.

  Run $A$ on the instance $\langle G', k'\rangle$
  and let $C_A$ (interpreted as a vertex set) be the output.
  Define the Hamming distance of $C_A$ to $P_i$  ($i\in\{0,1\}$)
  to be the number of vertices $v\in P$ that are in exactly one of the two sets $C_A$, $P_i$.
  Return {\em(a) $G$ has a size-$k$ vertex cover containing $v$}
  if the Hamming distance from $C_A$ to $P_0$
  is less than the Hamming distance from $C_A$ to $P_1$.
  Otherwise, return {\em (b) $G$ has a size-$k$ vertex cover not containing $v$.}
  
  To finish the proof we sketch why this procedure is correct.
  Let $C'$ be any minimum-size vertex cover of $G'$.
  By standard arguments, $C'$ has size $k'$ and one of two cases holds:
  \begin{description}
  \item[Case (1)] $C' = C\cup P_0$ where $C$ is a size-$k$ vertex cover of $G$ and $v\in C$, or
  \item[Case (2)] $C' = C\cup P_1$ where $C$ is a size-$k$ vertex cover of $G$ and $v\not\in C$.
  \end{description}
  By assumption, $C_A$ has Hamming distance at most $n'/2 - (n')^{\epsilon}$
  to some such $C'$.
  That is, $C_A$ agrees with $C'$ on at least 
  $n'/2 + (n')^{\epsilon} > n^{1/\epsilon}/2 + n$ vertices.
  Thus, focusing just on vertices in $P$,
  $C_A$ agrees with $C'$ on more than half of the vertices in $P$.
  
  If Case (1) above occurs for $C'$,
  then the Hamming distance from $C'$ to $P_0$ must be less than $|P|/2$,
  so the Hamming distance from $C'$ to $P_1$ must be more than $|P|/2$,
  so the algorithm returns 
  {\em (a) $G$ has a size-$k$ vertex cover containing $v$}.
  This is correct, given that Case (1) occurs.

  By a similar argument, if Case (2) occurs, then the algorithm returns
  {\em (b) $G$ has a size-$k$ vertex cover not containing $v$},
  which is correct in this case.

  This proves part (i).  The proof for part (ii) follows
  just as part (ii) of Thm.~\ref{theorem:SAT} follows
  from part (i) in the proof of Thm.~\ref{theorem:SAT}.
\end{proof}

\begin{theorem}\label{theorem:HAMPATH}
  Fix any constants $\epsilon,c> 0$.

  \noindent{\em (i)}
  Suppose that, for unweighted, directed Hamiltonian Cycle,
  the verifier that uses edge-subsets for witnesses
  has an $(n/2-n^\epsilon)$-Hamming-approximation algorithm.  Then \cclass{P}{}=\cclass{NP}{}.
  \smallskip

  \noindent{\em (ii)}
  Suppose that verifier
  has a randomized $(n/2-n^\epsilon)$-Hamming-approximation algorithm
  that works with probability $1/2+1/n^c$.  Then \cclass{RP}{}=\cclass{NP}{}.
\end{theorem}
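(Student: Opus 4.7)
The plan is to parallel the proofs of Theorems~\ref{theorem:SAT} and~\ref{theorem:other} using a padding-and-amplification argument. First I would establish an observation for directed Hamiltonian Cycle analogous to Observation~\ref{observation:single}: \emph{if a polynomial-time algorithm, given a directed graph $G$ known to contain a Hamiltonian cycle together with a specified edge $e$, returns a feasible bit for $e$ (namely, $1$ iff some Hamiltonian cycle of $G$ contains $e$), then $\cclass{P}{}=\cclass{NP}{}$.} This follows by the standard edge-by-edge self-reduction for Hamiltonian Cycle: iteratively use the primitive to commit or delete each edge while preserving Hamiltonicity on the updated graph, and verify the resulting edge set at the end.

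Given the primitive's hardness, I would use the hypothesized approximation algorithm $A$ to implement it via padding. Given $(G,e)$ with $n = |E(G)|$, I would build a padded graph $G'$ containing a gadget with $N = n^{1/\epsilon}$ new ``distinguisher'' edges whose joint pattern in any Hamiltonian cycle of $G'$ encodes $e$'s usage in the corresponding cycle of $G$ (for example, by replacing $e = (u, v)$ with a length-$(N{+}1)$ forced directed path $u \to x_1 \to \cdots \to x_N \to v$, so that all $N+1$ distinguisher bits are forced to $1$ in any such cycle), together with enough ``balancing'' edges so that the Hamiltonian-cycle edge fraction in $G'$ stays near $1/2$ and the $n'/2-(n')^\epsilon$ approximation radius remains meaningful. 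Running $A(G')$ returns a string $a'$ within Hamming distance $n'/2-(n')^\epsilon$ of some Hamiltonian cycle of $G'$. Since $(n')^\epsilon \ge N^\epsilon = n$, the same agreement-budget argument used in the proof of Theorem~\ref{theorem:SAT} forces majority agreement between $a'$ and the cycle on the $\Omega(N)$ distinguisher positions, so a majority vote over those bits reveals $e$'s feasible value. Part~(ii) follows by the same boosting argument as part~(ii) of Theorem~\ref{theorem:SAT}.

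The main obstacle will be the gadget design. Unlike 3-SAT (where copies of a variable are syntactic additions) or Vertex Cover (where an even-length pendant path attaches cleanly), the global degree-one constraint of a directed Hamiltonian cycle makes it delicate to insert a gadget that simultaneously (i) preserves Hamiltonicity exactly --- so $G'$ has a Hamiltonian cycle iff $G$ has one of the appropriate kind, (ii) encodes $e$'s usage in $\Omega(N)$ deterministic bits of any such cycle, and (iii) keeps the Hamiltonian-cycle edge-fraction near $1/2$, since a bare subdivision of $e$ alone would shift this fraction toward~$1$ and render the approximation threshold trivial (the all-$1$ string would then lie trivially within the permitted radius). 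Compensating for this shift --- e.g., by attaching short dummy branches to each new vertex whose bits are deterministically~$0$ in every Hamiltonian cycle of $G'$, or by a symmetric two-instance construction $(G', G'')$ that jointly covers the ``uses $e$'' and ``avoids $e$'' cases --- is the technical crux.
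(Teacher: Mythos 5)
Your high-level plan is right---pad, run the approximation algorithm, take a majority vote over the padding positions, and invoke an Observation-\ref{observation:single}-style self-reduction---but the proposal has a genuine gap exactly where you flag it: the gadget is not constructed, and the one you sketch does not work.

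Replacing $e=(u,v)$ by a single forced path $u\to x_1\to\cdots\to x_N\to v$ destroys the equivalence you need: a Hamiltonian cycle of $G'$ must then visit $x_1,\ldots,x_N$, which is possible only by traversing that path, so $G'$ is Hamiltonian iff some Hamiltonian cycle of $G$ \emph{uses} $e$---not iff $G$ is Hamiltonian. The approximation algorithm gives no guarantee on non-Hamiltonian inputs, so if $G$ is Hamiltonian but no Hamiltonian cycle uses $e$, the output of $A(G')$ is meaningless and the majority vote tells you nothing. You note this yourself (``preserve Hamiltonicity exactly'') and also correctly observe the secondary balance problem (all padding bits forced to $1$ makes the all-ones string a trivial approximation). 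But the ``dummy branches'' and ``two-instance'' ideas you float are not developed; pendant branches whose bits are always $0$ do not themselves obviously attach to a directed Hamiltonian cycle without breaking it, and a two-instance scheme needs a way to decide which instance's output to trust.

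The paper sidesteps both problems at once with a cleaner gadget. It works from a standard 3-SAT-to-directed-Hamiltonian-Cycle reduction in which each variable $z$ corresponds to an \emph{anti-parallel pair} of edges $(v,v')$ and $(v',v)$: every Hamiltonian cycle contains exactly one of the two, and which one it contains determines the truth value of $z$. The paper then replaces this pair by two \emph{anti-parallel paths on the same new vertex set}: $P_0 = (v,v_1,\ldots,v_k,v')$ and $P_1 = (v,v_k,v_{k-1},\ldots,v_1,v')$ with $k\approx n^{1/\epsilon}/2$. Because the new vertices $v_1,\ldots,v_k$ must all be visited, and their in/out edges come only from these two paths, any Hamiltonian cycle of $G'$ contains \emph{all} of $P_0$ or \emph{all} of $P_1$ --- exactly half of the $2(k+1)$ new edges, so the balance is automatic, and $G'$ is Hamiltonian iff $G$ is. The majority-vote / ``which of $P_0,P_1$ is closer'' test then reads off a feasible value for $z$, and Observation~\ref{observation:single} finishes. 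Your self-reduction idea (fix an edge, recurse) would also give a valid target primitive, but the missing piece in either route is this two-anti-parallel-paths gadget, which is precisely what the paper supplies.
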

\begin{proof}
  {\em (i)}
  Consider any polynomial-time reduction
from 3-SAT to unweighted, directed Hamiltonian Cycle.
By definition, such a reduction works as follows.
Given a 3-SAT formula $\psi$,
the reduction produces, in polynomial time,
a directed graph $G=(V,E)$ such that $G$ has a Hamiltonian cycle
if and only if $\psi$ is satisfiable;
further, given any Hamiltonian cycle $C$ in $G$,
the reduction describes how to
compute an assignment $A(C)$ satisfying $\psi$
in polynomial time.

There exist well known reductions
(e.g., see~\cite{sipser2006introduction})
such that $\psi$, $G$, and $A(\cdot)$ 
have the following further properties.
For any variable $z$ in $\psi$,
there are a pair of edges $(u,v)$ and $(v,u)$
such that, for any Hamiltonian cycle $C$,
either $C$ contains $(u,v)$ and $A(C)$
assigns $z =$ True,
or $C$ contains $(v,u)$ and $A(C)$
assigns $z =$ False.

Assume the algorithm $A$ from the observation exists.
We describe below how to modify any reduction with the above properties
so as to solve the following problem in polynomial time:
{\em Given a 3-SAT formula $\psi$, determine a feasible
value for the first variable in $\psi$ (if any exists}).
By Observation~\ref{observation:single},
this is enough to prove \cclass{P}{}=\cclass{NP}{}.

Given $\psi$, apply the reduction with the above properties
to compute the graph $G=(V,E)$.  Then, for the first variable,
say, $z$ in $\psi$, let $(v,v')$ and $(v',v)$ be the two edges
in $G$ for $z$ as described above.
Replace the edges $(v,v')$ and $(v,v')$, respectively,
with paths $P_0 = (v=v_0,v_1,v_2,\ldots,v_{k},v_{k+1} = v')$
and $P_1 = (v,v_{k},v_{k-1},\ldots,v_1,v')$,
where $v_1,v_2,\ldots,v_{k}$ are new vertices
and $k=n^{1/\epsilon}/2$, where $n=|E|$ is the number of edges in $G$.
Say that each edge $(v_i,v_{i+1})$ is a {\em duplicate} of $(v,v')$,
and that each edge $(v_{i+1}, v_i)$ is a duplicate of $(v',v)$.
Let $P=P_0\cup P_1$ be the set of all duplicate edges.
Call the resulting graph $G'$, and let $n' = n + n^{1/\epsilon}$
be the number of edges in $G'$.

Run the algorithm $A$ on $G'$, and let $P'$ be the output
(interpreted as an edge set).
Define the Hamming distance from $P'$ to $P_i$ (for $i=0,1$)
to be the number of edges $e$ in $P$
that are in exactly one of the two sets $P'$, $P_i$.
If the Hamming distance from $P'$ to $P_0$
is less than the Hamming distance from $P'$ to $P_1$,
then return {\em (a) The value True is feasible for the variable $z$}
and otherwise return {\em (b) The value False is feasible for the variable $z$}.

\smallskip 
To finish, we prove that this procedure determines a feasible value for $z$,
if there is one.
\smallskip

Assume that there is a feasible value for $z$ (that is, that $\psi$ is satisfiable).
The output $P'$ of $A$ then has Hamming distance 
at most $n'/2 - (n')^{\epsilon}$ to some Hamiltonian cycle $C'$ in $G'$.
That is, $P'$ agrees with some $C'$ on at least
$n'/2 + (n')^{\epsilon} > n^{1/\epsilon}/2 + n$ edges.
Thus, $P'$ agrees with $C'$ on strictly more than
$n^{1/\epsilon}/2$ (half) of the edges in $P$.
By the properties of the reduction, one of two cases holds:
\begin{description}
\item[Case 1.] $C' \cap P = P_0$,
  and True is a feasible value for $z$.
  In this case, the Hamming distance from $P'$ to $P_0$
  must be less than $|P|/2$,
  so the Hamming distance from $P'$ to $P_1$
  must be more than $|P|/2$,
  so the procedure returns
  {\em (a) The value True is feasible for the variable $z$},
  which is correct.

\item[Case 2.] $C'\cap P = P_1$,
  and False is a feasible value for $z$.
  By similar reasoning, the procedure is correct in this case as well.
\end{description}

  This proves part (i).  The proof for part (ii) follows
  just as part (ii) of Thm.~\ref{theorem:SAT} follows
  from part (i) in the proof of Thm.~\ref{theorem:SAT}.
 \end{proof}

\section{Black-box algorithms for approximating arbitrary verifiers}\label{section:arbitrary}
By Facts~\ref{fact:hardest} and~\ref{fact:converse},
the problem of approximating arbitrary \cclass{NP}{} verifiers
is equivalent to the problem of approximating the verifier $V_\U$ for the universal \cclass{NP}{}-complete
language $\U$.

\smallskip

\noindent{\em Randomized algorithms.} 
The best randomized approximation algorithm for $V_\U$ that we know of is the trivial algorithm:
guess $n$ random bits.  By Lemma~\ref{lemma:rand}, this 
achieves $(n/2+H(n,\alpha))$-Hamming approximation with probability $1-O(1/P(n,\alpha))$.
(As observed in the introduction, this nearly matches the main hardness result here for randomized approximation algorithms for $V_\U$.)

\smallskip

\noindent{\em Deterministic algorithms.} 
The best deterministic polynomial-time approximation algorithm for $V_\U$ that we know of
is as follows: {\em Test each $n$-bit candidate string with $c$ or fewer 1's.  If one is a witness,
return it, otherwise return $1^n$.}  
This is an $(n-c)$-approximation algorithm (for any constant $c$).  
(Note that $1^n$ is within
Hamming distance $n-c$ of all untested strings, and therefore within
Hamming distance $n-c$ of any witness if the algorithm returns $1^n$.)

\smallskip

\noindent{\em Lower bound for deterministic ``black-box'' algorithms.} 
An adversary argument shows that the above algorithm is near-optimal
among deterministic algorithms that use the verifier $V_\U$ as a {\em black box}
(by which we mean that, given any possible instance $I=(V,x,1^t)$, the algorithm determines 
information about $I$ only by querying the \cclass{NP}{} verifier $V_\U(I,w)$ with this $I$ 
and various potential witnesses $w$).  The adversary behaves as follows:
{\em Whenever the algorithm queries $V_\U(I,w)$ with a given choice of $w$, 
the adversary has $V_\U$ return ``no''.}
Suppose for contradiction that the algorithm runs in $o(n^{c-1})$ time for some constant $c$,
before it returns its answer $a$.
There are at least ${n \choose n-c+1} = \Omega(n^{c-1})$ strings 
whose Hamming distance to $a$ is $n-c+1$.
At least one of these strings $w'$ was not queried by the algorithm.
The adversary can take $w'$ to be the true witness, that is, 
it can make $V$ be such that $V(x,w)$ accepts in $t$ steps iff $w=w'$.
Then, for this instance $I$, the algorithm's answer $a$ does not achieve Hamming distance $n-c$.
Thus, any ``black-box'' deterministic algorithm that achieves
Hamming distance $n-c$ must take time $\Omega(n^{c-1})$ in the worst case.


\section{Gaps between lower and upper bounds}\label{section:discussion}
For deterministic Hamming-approximation algorithms for $V_\U$,
there is still a large gap between the lower bound ($n/2+O(\sqrt{n\log n})$)
and the upper bound ($n-O(1)$).
As discussed above, the upper bound 
cannot be improved for ``black-box'' algorithms.
As for the lower bound,
improving it would also require a different proof technique,
one that does more than use the supposed approximation algorithm $A_\U$ 
as a black box.
(This is simply because (1) such a proof technique cannot distinguish
between a deterministic algorithm and a randomized algorithm 
that fails with only exponentially small probability,
and (2) a stronger lower bound {\em does not hold} for such randomized algorithms:
guessing $n$ random bits achieves Hamming distance $n/2+O(\sqrt{n\log n})$ 
with high probability.)
Similarly, it seems likely that standard derandomization methods 
might yield a deterministic \cclass{P/POLY}\ algorithm for achieving 
Hamming distance $n/2+O(\sqrt{n\log n})$ for $\U$,
in which case a stronger lower bound would have to distinguish
polynomial-time algorithms from polynomial-time algorithms with polynomial advice.

\smallskip

Regarding the hardness of approximating {\em natural} verifiers,
we know that there are $n/2$-approximation algorithms
for natural verifiers for unweighted Vertex Cover, Clique, Independent Set, and similar problems.
The gap for these problems is smaller, since the lower bound is $n/2-n^\epsilon$.
But the gap could still be reduced for these problems.
For other problems the gap is larger,
mainly because we do not yet have $n/2$-approximation algorithms.
These include, for example, the weighted versions of the above problems, and 3-SAT.
The following leading problem is still open, both for deterministic algorithms
and for randomized algorithms that work with high probability:
\begin{quote}
  {\em Given a satisfiable 3-SAT formula,
  how hard is it to find an assignment to the variables 
  that has Hamming distance at most $n/2$ to a satisfying assignment?}
\end{quote}

\section{Acknowledgments}
Thanks to two anonymous referees for their careful reading and constructive suggestions,
including Observation~\ref{observation:vc}.

\newpage

\bibliographystyle{tocplain}   
\bibliography{arxiv_v2_bib}

\end{document}